\def\figcaption{\def\@captype{figure}\caption}
\newtheorem{theorem}{\bf Theorem}[section]
\newtheorem{lemma}[theorem]{\bf Lemma}
\newtheorem{proposition}[theorem]{\bf Proposition}
\newtheorem{corollary}[theorem]{\bf Corollary}
\newtheorem{example}{\bf Example}[section]
\newtheorem{remark}{\bf Remark}[section]
\newtheorem{definition}{\bf Definition}[section]
\title{Spectral mapping theorem \\ of an abstract quantum walk}
\author{
Yusuke Higuchi
\thanks{Mathematics Laboratories, College of Arts and Sciences, Showa University,
4562 Kamiyoshida, Fujiyoshida, Yamanashi 403-0005, Japan,
email: higuchi@cas.showa-u.ac.jp
},
Etsuo Segawa
\thanks{Graduate School of Information Sciences, Tohoku University,
Aoba, Sendai 980-8579, Japan,
email: e-segawa@m.tohoku.ac.jp
	}, and
Akito Suzuki
\thanks{Division of Mathematics and Physics, 
Faculty of Engineering, Shinshu University, Wakasato, Nagano 380-8553, Japan,
e-mail: akito@shinshu-u.ac.jp
	}
}
\begin{document}

\maketitle

\begin{abstract}
Given two Hilbert spaces, $\mathcal{H}$ and $\mathcal{K}$,
we introduce an abstract unitary operator $U$ on $\mathcal{H}$
and its discriminant $T$ on $\mathcal{K}$ induced by 
a coisometry from $\mathcal{H}$ to $\mathcal{K}$
and a unitary involution on $\mathcal{H}$.
In a particular case,
these operators $U$ and $T$ become the evolution operator of the Szegedy walk 
on a graph, possibly infinite, and the transition probability operator thereon.
We show  
the spectral mapping theorem between $U$ and $T$ 
via the Joukowsky transform. 
Using this result, we have 
completely detemined the spectrum of the Grover walk on the 
Sierpi\'nski lattice, 
which is pure point and has a Cantor-like structure.
\end{abstract}
\section{Introduction}
Quantum walks, whose primitive form appeared in \cite{FH} (1965) and \cite{Gu} (1988),
attracted the attention of many researchers 
at the beginning of the century because of their efficiencies of 
the quantum speed-up of search algorithm on some graphs 
(see \cite{Am0} and its references.).  
Szegedy  \cite{Sz} (2004) introduced an inclusive class of quantum walks 
partially including previous quantum walk models \cite{SKW,Am1,AKR}.
One of the interesting aspects of this class is that the spectrum of a walk 
is reduced to a spectral analysis in terms of the underlying reversible random walks 
on the same graph. 
This spectral mapping theorem is sometimes quite useful not only 
in estimating the efficiency of a search algorithm based on quantum walks \cite{SKW,Sz} 
but also in characterizing its stochastic long-time behavior \cite{IKS, KOS}. 

Recently, an extended version of the walk, 
the twisted Szegedy walk, 
was introduced in \cite{HKSS14}. 
For a graph $G=(V,D)$ with vertices $V$ and symmetric arcs $D$, 
the time evolution $U^{(w,\theta)}$ of the twisted Szegedy walk on $G$
is a unitary operator on $\ell^2(D)$ defined by 
\[ U^{(w,\theta)} = S^{(\theta)}C^{(w)}
\quad \mbox{with $C^{(w)} = 2 d_A^* d_A - 1$}. \]
Here $S^{(\theta)}$ is 
called a shift operator
and is a unitary involution
defined from a 1-form $\theta:D \to \mathbb{C}$.
$C^{(w)}$ is a coin operator and
$d_A:\ell^2(D) \to \ell(V)$ is a boundary operator, 
which is 
a coisometry defined from a weight $w:D \to \mathbb{C}$.
For a particular choice of $\theta$ and $w$,
$U^{(w,\theta)}$ becomes the evolution $U_G$ of the Grover walk on $G$,
which is one of the most intensively studied model of quantum walks on graphs (see \cite{W, Am0, HKSS13} and the references therein).
The discriminant $T^{(\theta,w)}= d_A S^{(\theta,w)} d_A^*$
is a self-adjoint operator on $\ell^{2}(V)$.
In the case of the Grover walk on $G$,
the discriminant of $U_G$ is unitary equivalent 
to the transition 
probability 
operator $P_G$
of the symmetric random walk on $G$, 
in which a walker on a vertex
moves to a neighbor vertex with isotropic probability.  
In \cite{HKSS14} the following spectral mapping 
theorem by the Joukowsky transform $\varphi(x)=(x+x^{-1})/2$
was obtained for finite graphs,
{\it i.e.}, $|V|, |D| < \infty$:
\begin{equation}
\label{fmap}
\sigma_{\rm p}(U^{(w,\theta)})
	= \varphi^{-1}(\sigma_{\rm p}(T^{(w,\theta)}))
		\cup \{1 \}^{M_+} \cup \{-1\}^{M_-}, 
\end{equation}
where 
$M_\pm = {\dim} \ker(d_A) \cap \ker(S^{(w, \theta)} \pm 1)$
and $\sigma_{\rm p}(\cdot)$ denotes the set of all eigenvalues.
In the expression above, $\{\pm 1\}^{M_{\pm}}$ implies the 
set of eigenvalue $\pm 1$ of multiplicity $M_{\pm}$, respectively; 
we assume 
$\{\pm 1\}^{M_\pm}=\emptyset$  
if $M_{\pm}=0$. 
Using \eqref{fmap}, the spectra of the evolution of the Grover walk
on crystal lattices, which have finite quotient graphs,
were also obtained.

In this paper, 
we extend the above spectral mapping \eqref{fmap} for 
finite graphs
to that for general infinite graphs. 
To this end, once we discard the graph structure,
consider two arbitrary Hilbert spaces $\mathcal{H}$ and $\mathcal{K}$,
and define an abstractive unitary operator $U$ on $\mathcal{H}$ 
as 
\begin{equation}
\label{formU1}
U=S(2d_A^*d_A-1). 
\end{equation} 
We suppose that:
(1) $S$ is a unitary involution on $\mathcal{H}$;
(2) $d_A$ is a coisometry from $\mathcal{H}$ to $\mathcal{K}$. 
Then, we obtain the spectral mapping theorem 
between $U$ and the discriminant $T=d_A S d_A^*$ of $U$,
which is a self-adjoint contraction operator 
on $\mathcal{K}$. 
Let $M_\pm = {\rm dim}\ker(d_A) \cap \ker (S\pm 1)$. 
We use $\sigma(\cdot)$ to denote the spectrum.

\begin{theorem}
\label{theorem1}
{\rm
Let $U$ and $T$ be as above. 
Then, 
	\begin{align}
       \sigma(U) 
        	&= \varphi^{-1}(\sigma(T)) \cup \{1\}^{M_+}
        		 \cup \{-1\}^{M_-}, \label{sigmap1}\\
        \sigma_{\rm p}(U) 
        	&= \varphi^{-1}(\sigma_{\rm p}(T)) \cup \{1\}^{M_+}
        		 \cup \{-1\}^{M_-}.  \label{sigmap2}
        \end{align}
Moreover, for all $\lambda \in \sigma_{\rm p}(U)$,
\begin{equation*}
{\rm dim}\ker(U-\lambda) 
	= \begin{cases}
		{\rm dim}\ker(T-\varphi(\lambda)), 
			& \lambda \not= \pm 1 \\
		 {\rm dim}\ker(T \mp 1) + M_\pm,
			& \lambda = \pm 1.
		\end{cases}
\end{equation*}
}
\end{theorem}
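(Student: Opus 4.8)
The plan is to transfer all spectral information between $U$ and $T$ through the coisometry $d_A$, combining two complementary devices: explicit maps between (approximate) eigenvectors, which handle the point spectrum and multiplicities, and a Feshbach--Woodbury resolvent identity, which handles the whole spectrum. Throughout I will use $d_Ad_A^*=1$, the involution identity $(S+z)^{-1}=(z-S)/(z^2-1)$ for $z\neq\pm1$, and the fact that, $U$ being unitary and $T$ self-adjoint, both have empty residual spectrum, so $\sigma(U)=\sigma_{\rm ap}(U)$ and $\sigma(T)=\sigma_{\rm ap}(T)$ and membership can be tested by Weyl sequences. It is also convenient to split off at the outset the reducing subspace $\ker d_A\cap S\ker d_A=\mathcal{M}^\perp$, where $\mathcal{M}=\overline{\operatorname{ran}d_A^*+S\operatorname{ran}d_A^*}$: there $C=-1$ and $U=-S$, an involution whose $\mp1$ eigenspaces are exactly $\ker d_A\cap\ker(S\pm1)$, so this summand contributes precisely $\{1\}^{M_+}\cup\{-1\}^{M_-}$ to the spectrum and isolates the discrete counts $M_\pm$.

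For the point spectrum with $\lambda\neq\pm1$ I would start from $U\psi=\lambda\psi$, rewrite it as $C\psi=\lambda S\psi$ with $C=2d_A^*d_A-1$, and set $f=d_A\psi$. Applying $d_A$ and then $S$ to this relation and using $d_Ad_A^*=1$ yields $d_AS\psi=\lambda^{-1}f$ and then $Tf=\varphi(\lambda)f$; moreover $f\neq0$, since $f=0$ would give $S\psi=-\lambda^{-1}\psi$ and force $\lambda=\mp1$. Conversely, from $Tf=\varphi(\lambda)f$ the explicit lift $\psi=(1-\lambda S)d_A^*f$ satisfies $C\psi=\lambda S\psi$, hence $U\psi=\lambda\psi$, and $d_A\psi=\tfrac{1-\lambda^2}{2}f\neq0$. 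Thus $d_A\colon\ker(U-\lambda)\to\ker(T-\varphi(\lambda))$ is injective (its kernel is trivial by the above) and onto (the lift provides preimages), giving both halves of the eigenvalue correspondence and $\dim\ker(U-\lambda)=\dim\ker(T-\varphi(\lambda))$.

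The endpoints $\lambda=\pm1$ need one extra rigidity input. If $Tf=\pm f$ with $f\neq0$, then $\langle d_A^*f,Sd_A^*f\rangle=\pm\|d_A^*f\|^2$ while $\|Sd_A^*f\|=\|d_A^*f\|$, so equality holds in Cauchy--Schwarz and $Sd_A^*f=\pm d_A^*f$; hence $\psi=d_A^*f$ is itself an eigenvector of $U$ for the eigenvalue $\pm1$ with $d_A\psi=f$. Therefore $d_A\colon\ker(U\mp1)\to\ker(T\mp1)$ is onto, and its kernel is exactly $\ker d_A\cap\ker(S\pm1)=N_\pm$ (on $\ker d_A$ one has $C=-1$, so $U\psi=\pm\psi$ reduces to $S\psi=\mp\psi$). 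Rank--nullity then gives $\dim\ker(U\mp1)=\dim\ker(T\mp1)+M_\pm$, completing \eqref{sigmap2} and the multiplicity formula.

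For the whole spectrum \eqref{sigmap1} I would treat $2Sd_A^*d_A$ as a factorized perturbation of $-(S+z)$. For $z\neq\pm1$, writing $U-z=-(S+z)\,[\,1-Kd_A\,]$ with $K=2(S+z)^{-1}Sd_A^*$ and using that $1-Kd_A$ is invertible on $\mathcal{H}$ iff $1-d_AK$ is invertible on $\mathcal{K}$, a direct computation gives $d_AK=\tfrac{2}{z^2-1}(zT-1)$, whence $1-d_AK=\tfrac{-2z}{z^2-1}\,(T-\varphi(z))$. Thus, for $z\neq0,\pm1$, $U-z$ is invertible iff $T-\varphi(z)$ is, which yields $\sigma(U)\setminus\{\pm1\}=\varphi^{-1}(\sigma(T))\setminus\{\pm1\}$. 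The remaining, and genuinely delicate, point is $z=\pm1$, where $(S+z)^{-1}$ blows up and the resolvent method fails. Here I would argue with Weyl sequences: the lift $f_n\mapsto d_A^*f_n$ sends a Weyl sequence for $T$ at $\pm1$ to one for $U$, since $\|(U\mp1)d_A^*f_n\|^2=\|(S\mp1)d_A^*f_n\|^2=2\,(1\mp\langle f_n,Tf_n\rangle)\to0$; conversely $\|(U\mp1)\psi_n\|\to0$ forces $\|(T\mp1)d_A\psi_n\|\to0$, so if $\|d_A\psi_n\|\not\to0$ we recover $\pm1\in\sigma(T)$. I expect the main obstacle to be the degenerate case $\|d_A\psi_n\|\to0$, in which $\psi_n$ concentrates on $\ker(S\pm1)$: I would resolve it through the closed-range dichotomy for $R=Q_\pm d_A^*$, where $Q_\pm$ projects onto $\ker(S\pm1)$ and $1\mp T=2R^*R$. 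If $M_\pm=0$ then $R^*=d_A|_{\ker(S\pm1)}$ is injective but not bounded below, hence has non-closed range, so $R$ is not bounded below and $0\in\sigma(R^*R)$, i.e. $\pm1\in\sigma(T)$; if $M_\pm>0$ the endpoint is already supplied by the reducing summand. Reconciling this continuous-spectrum behaviour at $\pm1$ with the discrete multiplicities $M_\pm$ is the crux of the argument.
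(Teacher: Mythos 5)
Your proposal is correct, and it is worth separating its two halves. The point-spectrum half essentially coincides with the paper's: your lift $\psi=(1-\lambda S)d_A^*f$ is exactly the paper's bijection $K_\lambda$ (with your $d_A$ playing the role of its inverse $M_\lambda$, up to the scalar $\tfrac{1-\lambda^2}{2}$), your Cauchy--Schwarz rigidity at $\lambda=\pm1$ is the content of Lemma \ref{lemmaL0}, and your reducing summand $\ker d_A\cap S\ker d_A$ is the paper's $\mathcal{L}^\perp=\ker d_A\cap\ker d_B$ with the same $\pm1$ eigenspace count $M_\pm$; you merely replace the orthogonal decomposition $\ker(U\mp1)=\mathcal{L}_0^\pm\oplus\mathcal{L}^\perp_\pm$ by rank--nullity for $d_A|_{\ker(U\mp1)}$, which is legitimate since $d_A^*$ is a bounded right inverse on $\ker(T\mp1)$. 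The full-spectrum half, however, is a genuinely different route. The paper first builds the invariant decomposition $\mathcal{H}=\overline{\mathcal{L}_1}\oplus\mathcal{L}_0\oplus\mathcal{L}^\perp$ (all of Section 4) and then runs Weyl sequences in both directions on $\overline{\mathcal{L}_1}$ (Lemma \ref{lemma*11*}); your Woodbury factorization $U-z=-(S+z)(1-Kd_A)$ with the Schur-complement computation $1-d_AK=\tfrac{-2z}{z^2-1}(T-\varphi(z))$ (which I checked: $(S+z)^{-1}S=(zS-1)/(z^2-1)$ and $d_A(zS-1)d_A^*=zT-1$) settles all $z\neq 0,\pm1$ in a few lines, with no invariant-subspace machinery and no approximate eigenvectors, and even yields an explicit resolvent formula. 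At the endpoints your treatment is arguably \emph{more} careful than the paper's: in the $\xi=0,\pi$ case of Lemma \ref{lemma*11*}(i) the paper excludes the degenerate regime $d_A\psi_n\to0$ by a claimed contradiction between \eqref{*300*} and a consequence of \eqref{*200*}, but \eqref{*200*} gives $S\psi_n=e^{-i\xi}C\psi_n+o(1)=\mp\psi_n+o(1)$, hence $\langle S\psi_n,\psi_n\rangle\to\mp1$, which \emph{agrees} with \eqref{*300*} rather than contradicting it; so degenerate Weyl sequences are not actually ruled out there. Your closed-range dichotomy addresses precisely this case: the degenerate sequence concentrates on $\ker(S\pm1)$ and witnesses that $d_A|_{\ker(S\pm1)}$ is not bounded below, so when $M_\pm=0$ injectivity plus the closed range theorem give non-closed range for $R_0=Q_\pm d_A^*$, hence $0\in\sigma(R_0^*R_0)=\sigma\bigl((1\mp T)/2\bigr)$, i.e.\ $\pm1\in\sigma(T)$, while for $M_\pm>0$ the endpoint is already supplied by $\mathcal{L}^\perp$. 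In sum: what the paper's route buys is the explicit invariant-subspace structure (reused in the companion paper for generators and localization); what yours buys is brevity, a resolvent identity off $\pm1$, and a watertight endpoint analysis at $z=\pm1$.
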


In a companion paper \cite{SS},
we construct 
the generator of $U$ under the conditions (1) and (2). 
As a byproduct \cite[Theorem 4.1 and Corollary 4.3]{SS}, 
the continuous part of $U$
is  unitarily equivalent to the continuous part of
${\rm exp}(i \arccos T) \oplus 
	{\rm exp}(-i \arccos T)$. %
Combining this with Theorem \ref{theorem1}
yields the following corollary.
We denote by $\sigma_{\rm c}(\cdot)$, 
$\sigma_{\rm ac}(\cdot)$, and $\sigma_{\rm sc}(\cdot)$
the continuous, absolutely continuous,
and  singular continuous spectrum.
\begin{corollary}
\label{Cor}
{\rm
Let $U$ and $T$ be as above. Then
\begin{equation*}
\sigma_\sharp(U) = \varphi^{-1}(\sigma_\sharp(T))
\quad \mbox{for $\sharp = {\rm c, ac, sc}$}.
\end{equation*}
}
\end{corollary}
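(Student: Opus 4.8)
The plan is to read the corollary off the unitary equivalence supplied by \cite{SS} together with a fine spectral mapping for the functional calculus of $T$. Write $W=\exp(i\arccos T)\oplus\exp(-i\arccos T)$ on $\mathcal{K}\oplus\mathcal{K}$; since $T$ is a self-adjoint contraction, $\sigma(T)\subseteq[-1,1]$ and $\arccos$ is a homeomorphism of $[-1,1]$ onto $[0,\pi]$, so $W$ is a well-defined unitary. First I would check that $W$ does not disturb the continuous subspace: because $t\mapsto e^{\pm i\arccos t}$ is injective, $\exp(\pm i\arccos T)$ has the same point subspace as $T$, and hence its continuous subspace is $\mathcal{K}_{\rm c}$, the continuous subspace of $T$. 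Consequently the continuous part $W_{\rm c}$ of $W$ equals $\exp(i\arccos T_{\rm c})\oplus\exp(-i\arccos T_{\rm c})$ with $T_{\rm c}=T|_{\mathcal{K}_{\rm c}}$.

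Next I would invoke \cite[Theorem 4.1, Corollary 4.3]{SS}: the continuous part $U_{\rm c}$ of $U$ is unitarily equivalent to $W_{\rm c}$. Since each of the spectral types $\sharp\in\{{\rm c,ac,sc}\}$ is supported entirely in the continuous subspace, unitary equivalence gives $\sigma_\sharp(U)=\sigma_\sharp(U_{\rm c})=\sigma_\sharp(W_{\rm c})=\sigma_\sharp(W)$. Thus the corollary reduces to the identity $\sigma_\sharp(W)=\varphi^{-1}(\sigma_\sharp(T))$.

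To prove this last identity I would pass to a spectral representation: on a cyclic subspace $T$ is multiplication by $t$ on some $L^2([-1,1],\mu)$, whence $\exp(i\arccos T)$ is multiplication by $z=e^{i\arccos t}$ on $L^2(C_+,\nu)$, where $C_+$ is the closed upper unit semicircle and $\nu=f_{+*}\mu$ is the pushforward under $f_+(t)=e^{i\arccos t}$. The set $\sigma_\sharp(\exp(i\arccos T))$ is then the support of the corresponding part of $\nu$. Because $f_+$ is a real-analytic diffeomorphism with nonvanishing derivative on the open interval $(-1,1)$, pushforward preserves the Lebesgue decomposition there, $\nu_{\rm ac}=f_{+*}\mu_{\rm ac}$ and $\nu_{\rm sc}=f_{+*}\mu_{\rm sc}$, while $f_+$ being a homeomorphism of all of $[-1,1]$ gives $\operatorname{supp}\nu_\sharp=f_+(\operatorname{supp}\mu_\sharp)$. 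Hence $\sigma_\sharp(\exp(i\arccos T))=f_+(\sigma_\sharp(T))$, and adding the conjugate factor $f_-(t)=e^{-i\arccos t}$ yields $\sigma_\sharp(W)=f_+(\sigma_\sharp(T))\cup f_-(\sigma_\sharp(T))=\varphi^{-1}(\sigma_\sharp(T))$, using $\varphi(e^{\pm i\arccos t})=\cos(\arccos t)=t$.

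The hard part will be the endpoints $t=\pm1$, where $\arccos$ has infinite derivative and $f_\pm$ fails to be a local diffeomorphism, so the ``diffeomorphism preserves ac/sc'' step degenerates precisely there. I would handle this by observing that $\{\pm1\}$ is a two-point, hence Lebesgue-null, set carrying no mass of the non-atomic measures $\mu_{\rm ac},\mu_{\rm sc}$; the degeneration therefore cannot shift weight between the absolutely continuous and singular continuous parts, and closedness of $\sigma_\sharp(W)$ up to and including $\pm1$ follows from continuity of $f_\pm$ together with compactness of $\sigma_\sharp(T)\subseteq[-1,1]$. Finally, Theorem \ref{theorem1} serves as a global consistency check, confirming through $\sigma(U)=\varphi^{-1}(\sigma(T))\cup\{1\}^{M_+}\cup\{-1\}^{M_-}$ that the eigenvalue contributions at $\pm1$ carry no continuous spectrum that could otherwise be missing from $\varphi^{-1}(\sigma_\sharp(T))$.
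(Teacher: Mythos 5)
Your proposal is correct and takes essentially the same route as the paper: the paper's entire argument for Corollary \ref{Cor} is to combine the unitary equivalence of the continuous part of $U$ with that of $\exp(i\arccos T)\oplus\exp(-i\arccos T)$ from \cite[Theorem 4.1 and Corollary 4.3]{SS} with Theorem \ref{theorem1}, exactly as you do. The measure-theoretic details you supply --- preservation of the point subspace under the injective functions $e^{\pm i\arccos t}$, the pushforward of the spectral measure preserving the Lebesgue decomposition, and the endpoint degeneracy at $\pm 1$ being harmless because $\sigma_{\rm ac}$ and $\sigma_{\rm sc}$ carry non-atomic measures --- are left implicit in the paper, and you have filled them in correctly.
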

As long as the conditions (1) and (2) are satisfied, 
Theorem \ref{theorem1} and Corollary \ref{Cor} ensure that the spectral mapping theorem holds not only for the Szegedy walks
on finite graphs but general infinite graphs, 
and also for arbitrary unitary operators of the form 
\eqref{formU1}. 
An example which is not directly concerned with a graph
is given in Section 3.

In the rest of this section, 
we go back to the graph world and give some interesting examples
of the Grover walks on infinite graphs $G$.
As mentioned, 
the discriminant of the Grover evolution $U_G$ is unitarily equivalent 
to the transition probability operator $P_G$.
See Example~\ref{ExampleTSW} for the details of the graph setting.
\\
\indent
First we see that Theorem \ref{theorem1} recovers 
some result in \cite{HKSS14} for a crystal lattice $G$
such as the $d$-dimensinal lattice $\mathbb{Z}^{d}$,
the hexagonal, the triangular, and the Kagome ones.
Detailed spectral structures, 
including the multilicities of eigenvalues $\pm 1$, 
are described in terms of geometric properties of a graph, which 
can be seen in \cite{HKSS14}. 
The continuous spectrum of $U_G$ is obtained by Corollary \ref{Cor}.
$P_G$ does not have any singular continuous spectrum on
the crystal lattice (\cite{GeNi, HiNo}),
then neither does $U_{G}$.
\begin{example}[\cite{HKSS14}]
\label{crystal}
Let $G$ be a crystal lattie with a finite quotient graph. Then 
\begin{align*}
& \sigma(U_G)= \varphi^{-1}(\sigma(P_G))
	\cup\{1\}^{M_+} \cup \{-1\}^{M_-}, \\
& \sigma_{\rm p}(U_G)= \varphi^{-1}(\sigma_{\rm p}(P_G))
	\cup\{1\}^{M_+} \cup \{-1\}^{M_-}, \\
& \sigma_{\rm ac}(U_G) = \varphi^{-1}(\sigma_{\rm ac}(P_G)),
	\quad \sigma_{\rm sc}(U_G) = \emptyset,
\end{align*}
where $M_\pm = \infty$ if $G$ has a cycle; $0$ otherwise.
\end{example}

Next two examples may be typical ones for showing 
the advantage of Theorem~\ref{theorem1}. 
The results in \cite{HKSS14} cannot be applied to them.

Let $\mathcal{T}_{d}$
be a $d$-regular tree $\mathcal{T}_{d}$ ($d\geq 2$), 
which is an infinite {\it acyclic\/} graph of constant degree $d$. 
See Figure~\ref{fig:figure1}. 
The spectrum of the transition probability
operator 
$P_{\mathcal{T}_{d}}$ on $\mathcal{T}_{d}$
is $\sigma(P_{\mathcal{T}_{d}})=[-2\sqrt{d-1}/d, 2\sqrt{d-1}/d]$. 
Refer to \cite{Figa,Sunada}, for instance. 
If $d\geq 3$, $\mathcal{T}_{d}$ is not a crystal lattice 
but the spectral mapping still holds from Theorem \ref{theorem1}. 
Detailed geometrical and analytical structure of the eigenvalues $\pm 1$ 
are discussed in \cite{HSe}. 
Moreover, we can find $\sigma(U_{\mathcal{T}_{d}})$ 
has no singular continuous spectrum 
by results in \cite{Figa,Sunada} with Corollary~\ref{Cor}. 
\begin{example}
\label{tree}
For the $d$-regular tree $\mathcal{T}_{d}$,
\begin{align*}
& \sigma(U_{\mathcal{T}_{d}})
= \sigma_{\rm ac}(U_{\mathcal{T}_{d}}) 
	\cup\sigma_{\rm p}(U_{\mathcal{T}_{d}}), 
\quad \sigma_{\rm p}(U_{\mathcal{T}_{d}})
	= \{1\}^{M_+} \cup \{-1\}^{M_-},\\
& \sigma_{\rm ac}(U_{\mathcal{T}_{d}})
	= \varphi^{-1}([-2\sqrt{d-1}/d, 2\sqrt{d-1}/d]), 
	\quad \sigma_{\rm sc}(U_{\mathcal{T}_{d}}) = \emptyset,
\end{align*}
where $M_\pm = \infty$ if $d\geq 3$; $0$ if $d=2$. 
\end{example}

\begin{figure}
\begin{minipage}{0.4\hsize}
\begin{center}
\includegraphics[scale=.3]{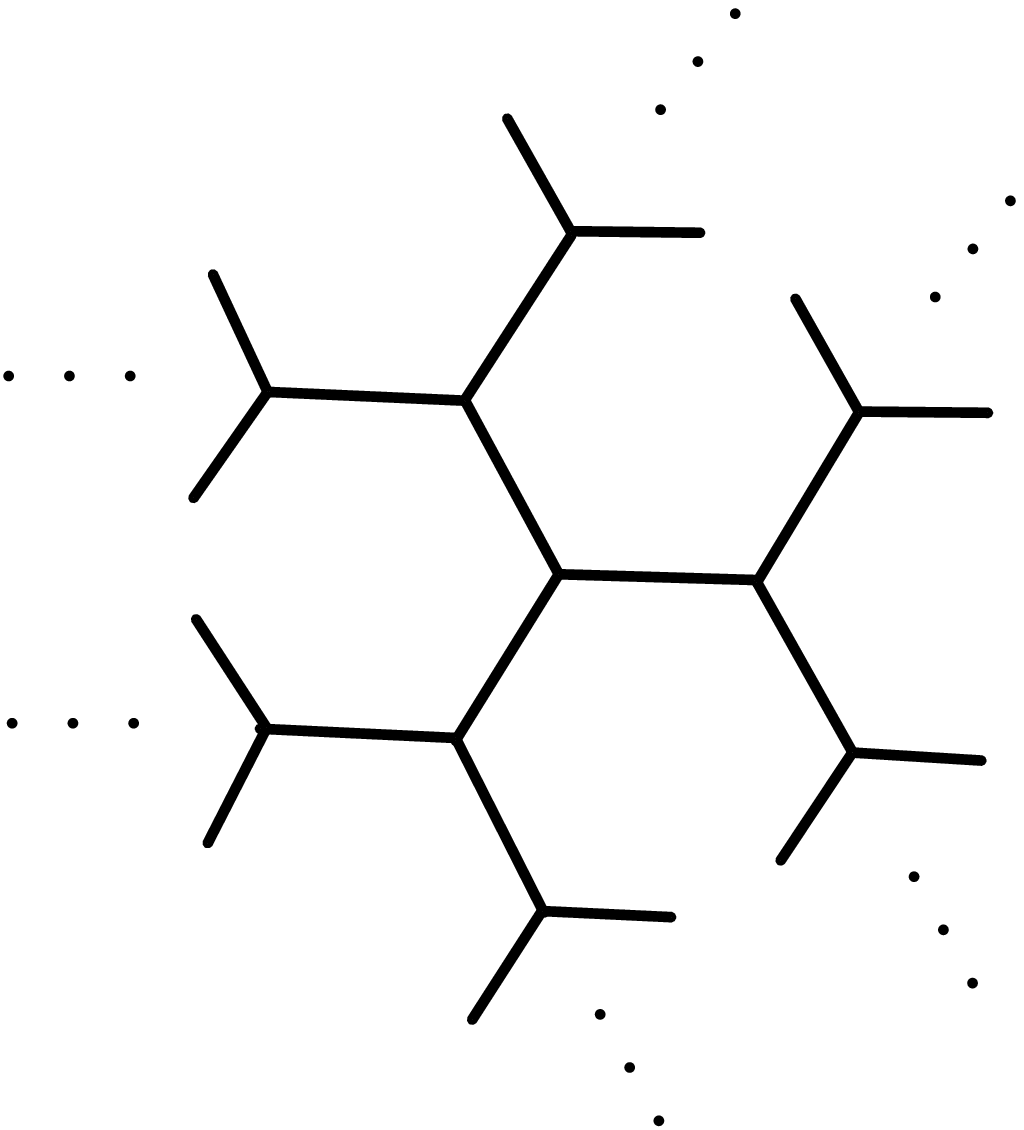}
\end{center}
\caption{$\mathcal{T}_{d}$ ($d=3$).}
\label{fig:figure1}
 \end{minipage}
\begin{minipage}{0.6\hsize}
\centering
\includegraphics[scale=0.68]{lattice.eps}
\caption{$\mathcal{S}_{d}$ ($d=2$).}
\label{fig:figure2}
\end{minipage}
\end{figure}
\par

Last example in this section is a graph which could be said to be
a skeleton of the famous fractal figure. Here we call it 
the $d$-dim Sierpi\'nski lattice $\mathcal{S}_{d}$, which can be 
found in \cite{HS,BP}. See Figure~\ref{fig:figure2}.

To construct an infinite Sierpi\'nski lattice $\mathcal{S}_{d}$, 
we define $f_{i} \ : \ \mathbb{R}^{d} \to \mathbb{R}^{d}$ 
by $f_{i}(x) = \frac{1}{2}(x+e_i) \ (0 \leq i \leq d)$. 
Here $\{e_i\}_{i=1}^{d}$ is the standard basis of $\mathbb{R}^d$ and 
$e_0$ be the $0$ vector. 
Furthermore we define $V_n$ inductively as follows:
\[
 V_0 = \bigcup_{0 \leq i < j \leq d} \{(1-t)e_i + t e_j \in \mathbb{R}^d \ ;  \ 
 0 \leq t \leq 1 \}
\]
and 
\[
 V_n = f_0^{-1}\left(\bigcup_{i=0}^d f_i(V_{n-1})\right),  
\quad n \geq 1.  
\]
We regard $\widetilde{\mathcal{S}}_d = \cup_{n \ge 0} V_n$ as an infinite
graph which is $2d$-regular except at the origin and the degree of the 
origin is $d$. 
Here the set of vertices of $V_{0}$ is identified with 
 $\{e_i\}_{i=0}^d$ and $V(\widetilde{\mathcal{S}}_d)$ 
with the set of all vertices 
defined repeatedly;  
similarly, the set of unoriented edges 
of
$V_{0}$ is identified with 
$\{e_{i}e_{j}\}_{0\leq i<j\leq d}$ and $E(\widetilde{\mathcal{S}}_d)$ 
with the set of all vertices 
defined repeatedly. 
We prepare two copies of an infinite graph $\widetilde{\mathcal{S}}_d$ 
and identify the vertices 
(the origins) of degree $d$. 
We call the infinite $2d$-regular graph constructed here 
the $d$-dimensional Sierpi\'nski lattice and denote it by $\mathcal{S}_d$. 
For such a fractal graph, we fortunately know the spectrum  
of the transition probability
operator $P_{\mathcal{S}_d}$ of the symmetric 
random walk on $\mathcal{S}_d$.
Refer to  \cite{FS,T,HS} for instance. 
Remark that $\mathcal{S}_d$ is not a crystal lattice. 
By Theorem \ref{theorem1}, 
we obtain the following.
\begin{example}\label{sierpinski}
For the $d$-dimensional Sierpi\'nski lattice  $\mathcal{S}_d$ with 
$d \geq 2$,
\begin{align*}
\sigma(U_{\mathcal{S}_d})
& =\varphi^{-1}\Big(
\overline{\bigcup_{k=0}^{\infty} \left\{ 
\{1-\rho^{-k}(\frac{d+1}{2d})\} \cup \{1-\rho^{-k}(\frac{d+3}{2d})\}
\right\}} \cup \{ \frac{-1}{d} \}
\Big) \\
& \qquad \cup  \{1\}^{M_+} \cup \{-1\}^{M_-}, 
\end{align*}
where $\rho(x) =-2dx^2+(d+3)x$
and $M_\pm = \infty$. 
\end{example}
\noindent
In the above, 
the multiplicity $M_\pm$ of $\{\pm 1\}$ can be derived 
by
the same argument as in \cite{HKSS14} 
in terms of the distribution of cycles. 
We remark that the same results hold for a standard Sierpi\'nski lattice
$\widetilde{\mathcal{S}}_{2}$. See \cite{T}.

We close this section
by mentioning a typical stochastic behavior named localization
for the above three examples of the Grover walk 
on an infinite graph $G=(V,D)$. 
Let $\psi_n = U_G^n\psi_0$ be the state of a walker 
at time $n \in \mathbb{N}$ with the initial state $\psi_0$ ($\|\psi_0\|=1$). 
The distribution $\mu_n^{(\psi_0)}: V\to [0,1]$ of the walker at time $n$
is defined by $\mu_n^{(\psi_0)}(u) =\sum_{e:t(e)=u}|\psi_n(e)|^2$. 
We say localization occurs if 
$\limsup_{n \to \infty}\mu_n^{(\psi_0)}(u) > 0$
with some $u \in V$.
It follows from the result of Teplyaev \cite{T}
that the spectrum of $P_{\mathcal{S}_2}$ is pure point 
and 
hence, by Theorem \ref{theorem1} and  Corollary \ref{Cor},
so is that of 
$U_{\mathcal{S}_2}$.
In particular, 
$U_{\mathcal{S}_{2}}$ has a complete set of eigenvectors.
By \cite[Corollary 4.4]{SS}, localization occurs 
for any initial state $\psi_0$.  
Thus, the time-evolution behavior of the Grover walk on $\mathcal{S}_{2}$ consists of only ``localization''. 
From Example \ref{crystal},
$\mathbb{Z}^d$ ($d \geq 2$)
satisfies
$\sigma_{\rm ac}(U_{\mathbb{Z}^d}) = \varphi^{-1}([-1, 1]) = S^1$,
$\sigma_{\rm sc}(U_{\mathbb{Z}^d})  = \emptyset$, 
and 
$\sigma_{\rm p}(U_{\mathbb{Z}^d}) = \{\pm 1\}$,  
because $\mathbb{Z}^d$ includes cycles. 
Example \ref{tree} also concludes that 
$S^1 \supsetneq \sigma_{\rm ac}(U_{\mathcal{T}_d}) \not= \emptyset$, 
$\sigma_{\rm sc}(U_{\mathcal{T}_d}) = \emptyset$, 
$\sigma_{\rm p}(U_{\mathcal{T}_d}) = \{\pm 1\}$ 
if $d \geq 3$. 
Hence, the time-evolution behavior of the Grover walk on
$\mathbb{Z}^d$ and $\mathcal{T}_d$ ($d \geq 2$) 
have a possibility to exhibit another stochastic behavior, for example, a linear spreading. 
Because $\mathcal{T}_2 = \mathbb{Z}$, it follows that
$\sigma_{\rm ac}(U_\mathbb{Z})
	= \varphi^{-1}([-1,1]) = S^1$ 
and $\sigma_{\rm sc}(U_\mathbb{Z})
	= \sigma_{\rm p}(U_\mathbb{Z}) = \emptyset$. 
Hence, localization never occurs for any initial states.
We summarize spectral and localization properties for the above three examples in the following table.

\begin{center}
\begin{tabular}{|c||c|c|c|c|}
\hline
Graph $G$
& $\sigma_{\rm ac}(U_G)$
& $\sigma_{\rm sc}(U_G)$
& $\sigma_{\rm p}(U_G)$  &  Localization \\ \hline \hline
$\mathbb{Z}$
& $S^1$
& $\emptyset$
& $\emptyset$
& for $\not\exists \psi_0 \in \ell^2(D)$ \\ \hline
$\mathbb{Z}^d$ ($d \geq 2$)
& $S^1$
&  $\emptyset$
& $\{\pm 1\}$
& for $\exists \psi_0 \in \ell^2(D)$ \\ \hline
$\mathcal{T}_d$ ($d \geq 3$)
& $\subsetneq S^1$
& $\emptyset$        
&   $\{\pm1 \}$
& for $\exists \psi_0 \in \ell^2(D)$ \\ \hline
$\mathcal{S}_{2}$
& $\emptyset$
& $\emptyset$
& $\sigma(U_G)= \overline{\sigma_{\rm p}(U_G)}$
& for $\forall \psi_0 \in \ell^2(D)$ \\
\hline
\end{tabular}
\end{center}

\begin{remark}
{\rm
In the above examples,
we consider the Grover walk \cite{Am0,W}.  
The behavior of quantum walks strongly depends 
on the definitions of shift and coin operators. 
Indeed, for 
other types of quantum walks 
on $\mathcal{S}_2$ and its Sierpi{\'n}ski pre-lattice, 
a numerical simulation suggests a diffusive spreading rate~\cite{LP}, 
and their recurrence relation 
obtained by a notion of renormalization group suggests that the spreading rate is close to ballistic~\cite{BFP}.
}
\end{remark}

This paper is organized as follows. 
We prepare notations and provide our setting in Section 2. 
Under the setting, we construct an abstractive unitary operator on $\mathcal{H}$ denoted by a unitary involution $S$ and coisometry map $d_A$ 
and give two examples in section 3. 
In Section 4, we introduce important invariant subspaces of our abstractive quantum walk induced by $S$ and $d_A$. 
We give the proofs of
Eqs.~(\ref{sigmap2}) and (\ref{sigmap1})
of Theorem~1 in Sections 5 and 6, respectively. 
The final section is the summary and discussion. 
\section{Preliminaries}
Let $\mathcal{H}$ and $\mathcal{K}$ be complex Hilbert spaces
and $d_A:\mathcal{H} \to \mathcal{K}$ a coisometry,
{\it i.e.}, $d_A$ is bounded and
\begin{equation}
\label{ddstar}
d_A d_A^* = I_{\mathcal{K}}, 
\end{equation}
where $I_{\mathcal{K}}$ is the identity operator on $\mathcal{K}$.
Then, $d_A^*:\mathcal{K} \to \mathcal{H}$ is an isometry,
because
\[ \|d_A^* f\|_{\mathcal{H}}^2 = \langle f, d_A d_A^* f \rangle_{\mathcal{K}} 
	= \|f\|_{\mathcal{K}}^2, \quad f \in \mathcal{K}. \]
Because $(d_A^*d_A)^2=d_A^* d_A$ from \eqref{ddstar},
we know that $\Pi_\mathcal {A}:= d_A^*d_A$
is the projection onto  $\mathcal{A}:={\rm Ran} (d_A^* d_A)$.
By \eqref{ddstar} again, we have
$f = d_A(d_A^*f) \in d_A\mathcal{H}$ for $f \in \mathcal{K}$.
Hence we observe that $\mathcal{K} = d_A \mathcal{H}$ and
\[ \mathcal{A} = d_A^* \mathcal{K} 
	= {\rm Ran} (d_A^*) = {\rm Ran} \Pi_\mathcal{A}. \]
Because $\ker(d_A) = {\rm Ran}(d_A^*)^\bot$,
we know that $d_A$ is a partial isometry and
\[ \|d_A \psi \|_{{\mathcal{K}}}^2
		= \langle \psi, \Pi_\mathcal{A} \psi \rangle_{{\mathcal{H}}}
		= \|\psi\|_{{\mathcal{H}}}^2,
			\quad \psi \in \mathcal{A} = \ker(d_A)^\perp. \]

We call the self-adjoint operator $C := 2 d_A^* d_A - 1$ on $\mathcal{K}$ 
a {\it coin operator}, 
because we observe, from Lemma \ref{lemmaC}, that
$C$ is decomposed into 
\[ C = I \oplus (-I) \quad \mbox{on 
$\mathcal{H} =\mathcal{A} \oplus \mathcal{A}^\perp$.} \]
\begin{lemma}
\label{lemmaC}
{\rm
Let $d_A$ and $C$ be as above.
Then, we have the following:\\
(1) $\sigma(C) = \{\pm 1\}$.\\
(2) $\mathcal{A} = \ker(C -1)$ and $\mathcal{A}^\perp = \ker(C+1)$\\
(3) $\displaystyle P_{\pm} =  \frac{1 \pm C}{2}$ 
	is the projection onto $\ker(C \mp 1)$ and
\[ P_+ = \Pi_\mathcal{A}, \quad P_- = \Pi_{\mathcal{A}^\perp}. \]
(4) 
$\mathcal{H}$ is decomposed into
\begin{align*}
\mathcal{H} & =  {\rm Ran} (d_A^*) \oplus \ker (d_A)
	 = \mathcal{A} \oplus \mathcal{A}^\perp 
	 = {\rm Ran} P_+ \oplus {\rm Ran} P_-.
\end{align*}
\quad In particular, we have
\[ C d_A^* = d_A^*, \quad  d_A C = d_A. \]
}
\end{lemma}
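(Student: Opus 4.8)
The plan is to reduce the entire lemma to the single observation, already recorded above, that $\Pi_{\mathcal{A}} = d_A^* d_A$ is the orthogonal projection onto $\mathcal{A}$, so that $C = 2\Pi_{\mathcal{A}} - 1$ is an affine function of a projection. First I would establish (2) and (3) by a direct computation on the orthogonal summands of $\mathcal{H} = \mathcal{A} \oplus \mathcal{A}^\perp$. On $\mathcal{A} = {\rm Ran}\,\Pi_{\mathcal{A}}$ the projection acts as the identity, whence $C = 2\cdot 1 - 1 = 1$ there; on $\mathcal{A}^\perp = \ker \Pi_{\mathcal{A}}$ it acts as $0$, whence $C = -1$ there. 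This gives the inclusions $\mathcal{A} \subseteq \ker(C-1)$ and $\mathcal{A}^\perp \subseteq \ker(C+1)$. Since $\mathcal{A}$ and $\mathcal{A}^\perp$ are orthogonal complements while the eigenspaces $\ker(C\mp 1)$ of the self-adjoint $C$ are themselves orthogonal and cannot jointly exceed $\mathcal{H}$, these inclusions must be equalities, yielding (2). For (3), substituting $C = 2\Pi_{\mathcal{A}} - 1$ into $P_\pm = (1 \pm C)/2$ gives $P_+ = \Pi_{\mathcal{A}}$ and $P_- = 1 - \Pi_{\mathcal{A}} = \Pi_{\mathcal{A}^\perp}$ immediately, and by (2) these are exactly the orthogonal projections onto $\ker(C \mp 1)$.

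For (1) I would invoke the spectral mapping property of the affine map $x \mapsto 2x - 1$ applied to the self-adjoint $\Pi_{\mathcal{A}}$, whose spectrum is contained in $\{0,1\}$; this at once gives $\sigma(C) \subseteq \{\pm 1\}$. The one point requiring care---and the only genuine obstacle here---is the reverse inclusion: asserting that both $+1$ and $-1$ actually occur amounts to knowing that neither $\mathcal{A}$ nor $\mathcal{A}^\perp$ is trivial. Nontriviality of $\mathcal{A} = {\rm Ran}(d_A^*)$ follows from $\mathcal{K} \neq \{0\}$, while $\mathcal{A}^\perp = \ker(d_A)$ is nonzero precisely when $d_A$ fails to be injective, which is the standing situation in the present setting; under this condition both eigenvalues appear and $\sigma(C) = \{\pm 1\}$.

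Finally, (4) is a bookkeeping step. The chain of equalities merely records the decomposition $\mathcal{H} = \mathcal{A} \oplus \mathcal{A}^\perp$ established earlier, together with the identifications ${\rm Ran}(d_A^*) = \mathcal{A}$, $\ker(d_A) = \mathcal{A}^\perp$, and ${\rm Ran}\,P_+ = \mathcal{A}$, ${\rm Ran}\,P_- = \mathcal{A}^\perp$ from (3). The two operator identities follow from a one-line computation using only the coisometry relation $d_A d_A^* = I_{\mathcal{K}}$: namely $C d_A^* = (2 d_A^* d_A - 1) d_A^* = 2 d_A^* (d_A d_A^*) - d_A^* = 2 d_A^* - d_A^* = d_A^*$, and taking adjoints (or computing symmetrically) gives $d_A C = d_A$. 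No deeper input is needed, so the whole lemma rests on the projection identity for $d_A^* d_A$.
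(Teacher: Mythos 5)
Your proof is correct and follows essentially the same route as the paper: both arguments rest on the identity $C = 2\Pi_{\mathcal{A}} - 1$ with $\Pi_{\mathcal{A}} = d_A^* d_A$ the orthogonal projection onto $\mathcal{A}$, obtain (2) and (3) by direct computation on the summands of $\mathcal{H} = \mathcal{A} \oplus \mathcal{A}^\perp$, and derive $C d_A^* = d_A^*$, $d_A C = d_A$ from the coisometry relation $d_A d_A^* = I_{\mathcal{K}}$. If anything, you are more careful than the paper on (1): its stated proof (self-adjointness plus $C^2 = 1$) only yields $\sigma(C) \subseteq \{\pm 1\}$, and as you observe, equality genuinely requires both $\mathcal{A}$ and $\ker(d_A)$ to be nontrivial, which the paper tacitly assumes.
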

\begin{proof}
{\rm
(1) is proved by the self-adjointness of $C$ and the fact
\[ C^2 = (2 d_A^* d_A - 1)^2 = 1. \]
By direct calculation, we have $C \Pi_\mathcal{A} = \Pi_\mathcal{A}$
and $C \Pi_{\mathcal{A}^\perp} = -\Pi_{\mathcal{A}^\perp}$.
Hence, (2) is proved.
To show (3), it suffices, from (2),  to show that
$P_+ = \Pi_\mathcal{A}$
and $P_- = \Pi_{\mathcal{A}^\perp}$,
which is proved through an easy calculation.
The above argument and (3) immediately lead to (4).
}
\end{proof}
From Lemma \ref{lemmaC} and its proof,
we know that the coin operator is a unitary involution,
{\it i.e.}, $C$ is unitary and self-adjoint and that $C^2=1$.

Let $S$ be a unitary involution on $\mathcal{H}$
and set 
\[ d_B = d_A S. \]
Observe that $d_B$ is also a coisometry.
Similarly to $d_A$, 
we observe that the projection onto the closed subspace
$\mathcal{B}:= {\rm Ran}(d_B^* d_B)$
is given by $\Pi_\mathcal{B} := d_B^* d_B$ 
and 
\[ \mathcal{B}=d_B^*\mathcal{K} = {\rm Ran}(d^*_B) = {\rm Ran} \Pi_\mathcal{B}. \]
We summarize the relation between
these two coisometries $d_A$ and $d_B$
in the following:
\begin{lemma}
\label{lemmaAB}
{\rm 
Let $d_A$ and $d_B$ be as above. 
Then, we have the following:\\
(1) $d_A S = d_B$, $S d_A^* = d_B^*$.\\
(2) $d_A d_A^* = d_B d_B^* = I_\mathcal{K}$. \\
(3) $\Pi_\mathcal{A} S = S \Pi_\mathcal{B}$.\\
(4) $d_A^*$, $d_B^*$ are isometry operators and
\[ \|d_A^* f\|_\mathcal{H} = \|d_B^* f\|_\mathcal{H}  = \|f\|_\mathcal{K}
	\quad \mbox{for all $f \in \mathcal{K}$}.  \]
}
\end{lemma}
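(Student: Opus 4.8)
The plan is to derive all four assertions by purely algebraic manipulation from the three defining relations: the involution identities $S^*=S$ and $S^2=I_\mathcal{H}$, the coisometry identity $d_A d_A^*=I_\mathcal{K}$ of \eqref{ddstar}, and the definition $d_B=d_A S$. No functional-analytic input beyond taking adjoints is needed, so there is no genuine obstacle in this lemma; the only point requiring a moment's thought is (3), where one should first recognize that $\Pi_\mathcal{B}$ is the conjugate of $\Pi_\mathcal{A}$ by $S$.

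First I would dispatch (1) and (2). The first identity of (1) is just the definition of $d_B$, and the second follows by taking adjoints, since $d_B^*=(d_A S)^*=S^* d_A^* = S d_A^*$ using $S^*=S$. For (2), the equality $d_A d_A^*=I_\mathcal{K}$ is \eqref{ddstar} itself, while for $d_B$ I would compute $d_B d_B^* = d_A S\, S^* d_A^* = d_A S^2 d_A^* = d_A d_A^* = I_\mathcal{K}$, invoking $S^*=S$ and $S^2=I_\mathcal{H}$; this is precisely the observation, already noted in the text, that $d_B$ is again a coisometry.

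For (3) the key step is to express $\Pi_\mathcal{B}$ through $\Pi_\mathcal{A}$: from $d_B=d_A S$ and $d_B^*=S d_A^*$ I obtain
\[
\Pi_\mathcal{B}=d_B^* d_B = S\,d_A^* d_A\, S = S\,\Pi_\mathcal{A}\,S.
\]
Multiplying on the left by $S$ and using $S^2=I_\mathcal{H}$ then yields $S\Pi_\mathcal{B}=S^2\Pi_\mathcal{A}S=\Pi_\mathcal{A}S$, which is the asserted intertwining relation.

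Finally, (4) follows from (2) exactly as the isometry of $d_A^*$ was established in the discussion preceding Lemma \ref{lemmaC}: for every $f\in\mathcal{K}$ one has $\|d_B^* f\|_\mathcal{H}^2=\langle f, d_B d_B^* f\rangle_\mathcal{K}=\|f\|_\mathcal{K}^2$ by $d_B d_B^*=I_\mathcal{K}$, and the same computation with $d_A$ in place of $d_B$ gives $\|d_A^* f\|_\mathcal{H}=\|f\|_\mathcal{K}$. This yields both the isometry of $d_A^*$, $d_B^*$ and the stated equalities of norms, completing the proof.
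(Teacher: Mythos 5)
Your proof is correct and is precisely the straightforward algebraic verification (taking adjoints via $S^*=S$, using $S^2=I_\mathcal{H}$ and $d_Ad_A^*=I_\mathcal{K}$, and the conjugation identity $\Pi_\mathcal{B}=S\Pi_\mathcal{A}S$ for part (3)) that the paper has in mind when it omits the proof as straightforward. No gaps; nothing further is needed.
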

We omit the proof because it is straightforward.

\section{Abstract quantum walks and two examples}\label{sec:3}
Given a coisometry $d_A:\mathcal{H} \to \mathcal{K}$ 
and a unitary involution $S$ on $\mathcal{H}$,
we can define the coin operator $C = 2d_A^* d_A -1$ 
and the coisometry $d_B = d_A S$ as in the previous section.
Throughout this section, we fix $d_A$ and $S$
and call them a boundary operator and a shift operator, respectively.
In analogy with the twisted Szegedy walk (see Example \ref{ExampleTSW}),
we define an abstract time evolution $U$
and its discriminant $T$ as follows:
\begin{definition}
\label{Def.evo}
{\rm 
Let $d_A$, $d_B$, $C$ and $S$ be as above.
Then,
\begin{itemize}
\item[(1)] The evolution associated with 
	the boundary operator $d_A$ and the shift operator $S$ is defined by
	\[ U = S C. \] 
\item[(2)] The discriminant of $U$ is defined by
	\[ T = d_A d_B^*. \]
\end{itemize}
}
\end{definition}
$S$ and $C$ are unitary operators on $\mathcal{H}$,
so is the evolution $U$.
By definition, the discriminant $T$ 
is a self-adjoint operator on $\mathcal{K}$. 

We present the two examples. 
The first one is the extended version of the Szegedy walk on a graph;
the second one is  not directly concerned with any graph. 
\begin{example}[twisted Szegedy walk \cite{HKSS14}]
\label{ExampleTSW}
{\rm
Let $G =(V, E)$ be a (possibly infinite) graph
with the sets $V$ of vertices and $E$ of unoriented edges
($E$ can include multiple edges and self-loops). 
We use $D$ to denote the set of symmetric arcs induced by $E$.
For an arc $e\in D$, the origin and the terminus of $e$ are 
denoted by $o(e)$ and $t(e)$, respectively. 
The inverse edge of $e$ is denoted by $\bar{e}$.
Let $\mathcal{H} = \ell^2(D)$ and $\mathcal{K} = \ell^2(V)$.
A {\it boundary operator} $d_A:\mathcal{H} \to \mathcal{K}$
is defined as
\begin{align*}
(d_A\psi)(v) = \sum_{e:o(e)=v} \psi(e) \overline{w(e)}, \quad v \in V 
\end{align*}
for all $\psi \in \mathcal{H}$.
Here $w:D \to \mathbb{C}$ is a {\it weight}, satisfying 
\[ \sum_{e:o(e) = v} |w(e)|^2 = 1 \quad \mbox{for all $v \in V$}. \]
The adjoint $d_A^*:\mathcal{K} \to \mathcal{H}$ of $d_A$
is called a {\it coboundary operator} and given by
\begin{align*}
(d_A^* f)(e) = w(e) f(o(e)), \quad e \in D
\end{align*} 
for all $f \in \mathcal{K}$.
We observe that $d_A d_A^* = I_\mathcal{K}$,
because
\[ (d_Ad_A^*f)(v) = \sum_{e:o(e)=v} (d_A^*f)(e) \overline{w(e)}
	 = \sum_{e:o(e)=v}  |w(e)|^2 f(o(e)) = f(v). \]
Hence, the boundary operator $d_A$ is a coisometry.

We call a map $\theta:D \to \mathbb{C}$
a 1-form if it satisfies
\[ \theta(\bar{e}) = - \theta(e) \quad \mbox{for all $e \in D$}. \]
In \cite{HKSS14}, the twisted Szegedy walk
associated with the weight $w$ and the 1-form $\theta$
is defined as follows:
\begin{itemize}
\item[(1)] The total state space is $\mathcal{H}$;
\item[(2)] The time evolution is 
	\[ U^{(w,\theta)} = S^{(\theta)} C^{(w)}, \]
	where the coin operator $C^{(w)}$ is given by
		\[ C^{(w)} = 2 d_A^* d_A - 1 \]
	and the twisted shift operator $S^{(\theta)}$ by
		\[ (S^{(\theta)}\psi)(e) = e^{-i\theta(e)}\psi(\bar{e}), 
			\quad e \in D \]
	for all $\psi \in \mathcal{H}$;
\item[(3)] The finding probability $\nu_n(u)$ of the twisted Szegedy walk
	at time $n$ at vertex $u$ is defined by
		\[ \nu_n(u) = \sum_{e:o(e)=u}|\Psi_n(e)|^2, \]
	where $\Psi_n\in \mathcal{H}$
	is the $n$-th ($n \in \mathbb{N}$) iteration of the quantum walk 
	with the initial state $\Psi_0 \in \mathcal{H}$ ($\|\Psi_0\|^2 = 1$), 
	{\it i.e.},  $\Psi_n = (U^{(w,\theta)})^n \Psi_0$. 
\end{itemize}
Because $\theta$ is a 1-form, we know that $S^{(\theta)}$ is self-adjoint.
It is easy to check $(S^{(\theta)})^2 = 1$ by definition.
Thus, we know that $S^{(\theta)}$ is a unitary involution.

We observe that the boundary operator $d_A$, coin operator $C^{(w)}$, 
and twisted shift operator $S^{(\theta)}$ of the twisted Szegedy walk
are examples of the abstract coisometry $d_A$, coin operator $C$,
and unitary involution $S$, respectively.

Because $S^{(\theta)}$ is a unitary involution,
we know that $d_B = d_A S^{(\theta)}$, also known as a boundary operator, 
is a coisometry.
The discriminant operator on $\ell^2(V)$ 
\[ T^{(w,\theta)} = d_A d_B^* \]
is expressed by 
	\[ (T^{(w,\theta)}f)(u)=\sum_{e: t(e)=u} e^{i\theta(e)} w(e)\overline{w(\bar{e})} f(o(e)), \] 
which means that $\langle \delta_v, T^{(w,\theta)}\delta_u\rangle =0$ 
if and only if $(u,v),(v,u)\notin D$. 
This twisted version of Szegedy walk can be used effectively for 
a finite quotient graph in a crystal lattice. See \cite{HKSS14}. 

Let us set $\theta(\cdot )=0$ and $w(e)=1/\sqrt{\deg(o(e))}$, 
where $\deg (x)$ is the degree of a vertex $x$, that is, 
the number of oriented edges $e$ such that $o(e)=x$. 
Then we have
\begin{equation}
\label{RW}
(Tf)(u)=(T^{(w,\theta)})f(u)=
\sum_{e: t(e)=u}(1/\sqrt{\deg(o(e))\deg(t(e))})f(o(e)),
\end{equation}
which is unitarily equivalent to 
$P_G$
on $\ell^{2}(V,\deg )$, 
where $f\in \ell^{2}(V,\deg )$ and 
$\|f\|^{2} = \sum_{x\in V } |f(x)|^{2}\deg(x) <\infty$. 
Here 
$P_G$
is the transition probability operator of 
the symmetric 
random walk on $G$. 
We remark that $P_G=T$ 
if $G$ is $d$-regular, that is, $\deg(x) =d$ for 
any $x\in V$. 
For $w(e)=1/\sqrt{\deg(o(e))}$ and $\theta(\cdot )=0$, 
$U=U^{w,\theta}$ is said to be the evolution operator of the Grover walk:
\begin{equation}
\label{GW}
(U\psi)(e)=
\sum_{f: o(e)=t(f)}(2/\deg(o(e))-\delta_{\bar{e},f})\psi(f). 
\end{equation}
for $\psi \in \ell^{2}(D)$. 
}
\end{example}
We give an example which is not apparently related to a graph. 
\begin{example}
{\rm
{Let $\mathcal{H} = L^2(\mathbb{R}) \oplus L^2(\mathbb{R})$
and $\mathcal{K} = L^2(\mathbb{R})$.
We prepare two $C^\infty$ functions $\chi_0$ and $\chi_\infty$ satisfying $\chi^2_0(x)+\chi^2_\infty(x)=1$ for every $x\in \mathbb{R}$. 
As the boundary operator, we choose for $\psi=f\oplus g\in L^2(\mathbb{R}) \oplus L^2(\mathbb{R})$, 
	\[ d_A(f\oplus g)=\chi_0f+\chi_\infty g. \]
It is easily seen that 
	\[ d_A^* f=\chi_0f \oplus \chi_\infty f,  \]
and $d_A$ is a coisometry. Now, we choose $S$ as $S(f\oplus g)=g\oplus f$. 
Then, the unitary operator $U = S(2d_A^*d_A-1)$ is expressed by 
	\[ U(f\oplus g)=\left( 2\chi_0\chi_\infty f + (2\chi_{\infty}^2-1)g \right)\oplus \left( (2\chi_0^2-1)f + 2\chi_0\chi_\infty g \right), \]
which implies that $U$ is isomorphic to 
	\[ U\cong \begin{bmatrix}  2\chi_0\chi_\infty & 2\chi_\infty^2-1 \\ 2\chi_0^2-1 & 2\chi_0\chi_\infty \end{bmatrix}. \] 
The discriminant $T$ is equivalent to $2\chi_0\chi_\infty$.
}
}
\end{example}

\section{Invariant subspaces of $U$}
{In this section we introduce important invariant subspaces of 
abstract evolution $U$ in Definition \ref{Def.evo} to describe the spectrum. 
We list important properties of $d_A$ and $T$ in the following lemmas. }
\begin{lemma}
\label{lemmaUTSC}
{\rm
The following hold:\\
(1) $C d_A^* = d_A^*$, \quad$d_A C = d_A$.\\
(2) $C d_B^* = 2 d_A^* T -d_B^*$, \quad 
	$d_B C = 2T d_A - d_B$.\\
(3) $U d_A^* = d_B^*$, \quad$Ud_B^* = 2 d_B^* T - d_A^*$.\\
(4) $d_A U d_A^* = d_B U d_B^* = T$, \quad $d_B U d_A^* = I_{\ell^2(V)}$\\
(5) $T = d_A S d_A^* = d_A d_B^* = d_B d_A^*$.\\
(6) $d_A^* T d_A = \Pi_\mathcal{A} U \Pi_\mathcal{A}$,
	\quad $d_B^* T d_B = \Pi_\mathcal{B} U \Pi_\mathcal{B}$
}
\end{lemma}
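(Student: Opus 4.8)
The plan is to verify the six assertions by direct algebraic manipulation, exploiting the defining relations $d_Ad_A^*=I_{\mathcal{K}}$, $S^2=1$, $C=2d_A^*d_A-1$, $d_B=d_AS$, $U=SC$, and $T=d_Ad_B^*$, together with the already-established facts from Lemma \ref{lemmaC} and Lemma \ref{lemmaAB}. Most of these are one- or two-line computations, so the real task is to order them so that each identity feeds the next.

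First I would dispose of (1), which is simply Lemma \ref{lemmaC}(4), and of (5), which unwinds the definitions: $T=d_Ad_B^*=d_A(d_AS)^*=d_ASd_A^*$, and the last equality $T=d_Bd_A^*$ follows since $T$ is self-adjoint, $T^*=(d_Ad_B^*)^*=d_Bd_A^*$, and $T=T^*$. Next, for (2) I would compute $Cd_B^*=Cd_Ad_A^*\cdot(\text{correction})$; more directly, write $d_B^*=Sd_A^*$ from Lemma \ref{lemmaAB}(1), so $Cd_B^*=(2d_A^*d_A-1)Sd_A^*=2d_A^*(d_ASd_A^*)-Sd_A^*=2d_A^*T-d_B^*$, using (5) for the middle factor. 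The companion identity $d_BC=2Td_A-d_B$ is the adjoint of this (both sides are operators between the stated spaces, and $C,T$ are self-adjoint), or one reruns the same computation on the other side.

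For (3) I would use $U=SC$: then $Ud_A^*=SCd_A^*=Sd_A^*=d_B^*$ by (1) and Lemma \ref{lemmaAB}(1); and $Ud_B^*=SCd_B^*=S(2d_A^*T-d_B^*)=2Sd_A^*T-Sd_B^*=2d_B^*T-d_A^*$, where I have applied (2), then Lemma \ref{lemmaAB}(1) to get $Sd_A^*=d_B^*$, and $Sd_B^*=S^2d_A^*=d_A^*$ since $S$ is an involution. Statement (4) is then immediate: $d_AUd_A^*=d_Ad_B^*=T$ by (3) and (5); $d_BUd_B^*=d_B(2d_B^*T-d_A^*)=2(d_Bd_B^*)T-d_Bd_A^*=2T-T=T$ using $d_Bd_B^*=I_{\mathcal{K}}$ from Lemma \ref{lemmaAB}(2) and $d_Bd_A^*=T$ from (5); and $d_BUd_A^*=d_Bd_B^*=I_{\mathcal{K}}$ (the ambient space is $\mathcal{K}=\ell^2(V)$ in the graph case). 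Finally, for (6) I would insert $\Pi_{\mathcal{A}}=d_A^*d_A$ and compute $\Pi_{\mathcal{A}}U\Pi_{\mathcal{A}}=d_A^*d_AUd_A^*d_A=d_A^*(d_AUd_A^*)d_A=d_A^*Td_A$ by (4), and symmetrically $\Pi_{\mathcal{B}}U\Pi_{\mathcal{B}}=d_B^*d_BUd_B^*d_B=d_B^*Td_B$, again by (4).

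The only point demanding care—rather than an outright obstacle—is bookkeeping of adjoints and domains: each identity pairs an operator statement with its transpose, so I must confirm that taking adjoints is legitimate (all operators here are bounded, $U$ and $S$ are unitary, $C$ and $T$ are self-adjoint) and that I consistently substitute $d_B^*=Sd_A^*$ versus $d_B=d_AS$ on the correct side. Given the self-adjointness of $C$ and $T$, no convergence or unboundedness issue arises, so the entire lemma reduces to careful but routine operator algebra.
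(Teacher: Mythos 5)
Your proof is correct, and it matches the paper's intent: the paper states only that ``the proof is an easy exercise and is omitted,'' and the direct operator algebra you carry out --- using $d_Ad_A^*=I_{\mathcal{K}}$, $d_B^*=Sd_A^*$, $S^2=1$, and the self-adjointness of $C$ and $T$ to pass between each identity and its adjoint companion --- is exactly the routine verification the authors had in mind, in a sensible order where (1) and (5) feed (2), (2) feeds (3), and (3) feeds (4) and (6). Your parenthetical remark that $I_{\ell^2(V)}$ in item (4) should read $I_{\mathcal{K}}$ in the abstract setting is also correct; that is a leftover of the graph-case notation in the paper.
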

\begin{proof}
{\rm
The proof is an easy exercise and is omitted.
}
\end{proof}
\begin{lemma}
\label{lemmaTnorm}
{\rm
$\|T\| \leq  1$.
}
\end{lemma}
\begin{proof}
The assertion follows from the following calculation:
\begin{align*}
\|T f \|^2 = \langle f, d_A S d_A^* f \rangle 
	= \langle d_A^* f, S d_A^* f \rangle 
	\leq \|d_A^* f \|^2 = \|f \|^2, \quad f \in \mathcal{K}.
\end{align*}
\end{proof}
Because $S$ is a unitary involution,
the following lemma is proved similarly to Lemma \ref{lemmaC}.
\begin{lemma}
\label{lemmaS}
\noindent {\rm
\begin{itemize}
\item[(1)] $\sigma(S) = \{ \pm 1\}$.
\item[(2)] The projection $Q_\pm$ onto $\mathcal{H}^S_\pm:= \ker(S \mp 1)$ is given by
	\[  Q_{\pm} =  \frac{1 \pm S}{2}. \]
\end{itemize}
}
\end{lemma}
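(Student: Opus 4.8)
The plan is to mirror the proof of Lemma~\ref{lemmaC}, using nothing about $S$ beyond the two facts that it is self-adjoint and unitary (equivalently, self-adjoint with $S^2=1$). For part~(1) the quickest route is to intersect the two constraints coming from these properties: since $S$ is unitary, $\sigma(S)$ is contained in the unit circle $\{z\in\mathbb{C}:|z|=1\}$, while since $S$ is self-adjoint, $\sigma(S)\subseteq\mathbb{R}$. Hence $\sigma(S)\subseteq\{z:|z|=1\}\cap\mathbb{R}=\{\pm1\}$. Alternatively one can argue exactly as for $C$: from $S^2=1$ and the spectral mapping theorem for the polynomial $x\mapsto x^2$, any $\lambda\in\sigma(S)$ satisfies $\lambda^2\in\sigma(S^2)=\{1\}$, so $\lambda=\pm1$.

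For part~(2), I would set $Q_\pm=(1\pm S)/2$ and verify directly that these are the orthogonal projections onto $\ker(S\mp1)$. Self-adjointness of $Q_\pm$ is immediate from that of $S$, and idempotency follows from the involution identity:
\[
Q_\pm^2=\tfrac14(1\pm S)^2=\tfrac14(1\pm 2S+S^2)=\tfrac{1\pm S}{2}=Q_\pm,
\]
so each $Q_\pm$ is an orthogonal projection; moreover $Q_++Q_-=1$ and $Q_+Q_-=\tfrac14(1-S^2)=0$, so their ranges are mutually orthogonal and span $\mathcal{H}$. To identify the range I would compute $SQ_\pm=\tfrac12(S\pm S^2)=\pm Q_\pm$, which shows ${\rm Ran}(Q_\pm)\subseteq\ker(S\mp1)$; conversely, if $S\psi=\pm\psi$ then $Q_\pm\psi=\tfrac12(\psi\pm\psi)=\psi$, giving the reverse inclusion and hence ${\rm Ran}(Q_\pm)=\ker(S\mp1)=\mathcal{H}^S_\pm$.

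All of the computations are routine, so there is no genuine analytic obstacle here; the only point deserving a word of care is the equality (rather than mere inclusion) in $\sigma(S)=\{\pm1\}$, since a priori one of the two eigenspaces $\mathcal{H}^S_\pm$ could be trivial. This is handled exactly as in Lemma~\ref{lemmaC}: the decomposition $\mathcal{H}={\rm Ran}\,Q_+\oplus{\rm Ran}\,Q_-$ exhibits the relevant eigenspaces, and in the present setting both are nontrivial, so the spectrum is genuinely $\{\pm1\}$.
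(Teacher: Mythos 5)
Your proof is correct and follows essentially the same route as the paper, which omits the argument entirely with the remark that it ``is proved similarly to Lemma~\ref{lemmaC}'' --- i.e.\ exactly your computation: $\sigma(S)\subseteq\{\pm1\}$ from self-adjointness plus $S^2=1$, and direct verification that $Q_\pm=(1\pm S)/2$ are the orthogonal projections onto $\ker(S\mp1)$. Your closing caveat is well spotted but is glossed in the paper too: strictly, a unitary involution can be $\pm I$ (one-point spectrum), so the equality $\sigma(S)=\{\pm1\}$ should be read, here as in Lemma~\ref{lemmaC}, as the inclusion together with the tacit assumption that both eigenspaces $\mathcal{H}^S_\pm$ are nontrivial --- the decomposition $\mathcal{H}=\mathrm{Ran}\,Q_+\oplus\mathrm{Ran}\,Q_-$ alone does not establish that.
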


\noindent
We now define three subspaces $\mathcal{L}$, $\mathcal{L}_1$, 
and $\mathcal{L}_0 \subset \mathcal{H}$ as follows:
\begin{align*}
& \mathcal{L} = \mathcal{A} + \mathcal{B}\\
& \mathcal{L}_1 = d_A^* \ker(T^2-1)^\perp + d_B^*\ker(T^2-1)^\perp \\
& \mathcal{L}_0 = d_A^* \ker(T^2-1).
\end{align*}
$\mathcal{L}_0 = \{0\}$ if and only if
$\sigma_{\rm p}(T) \cap \{ +1, -1\} = \emptyset$.
In this case, $\mathcal{L} = \mathcal{L}_1$
and thus the problem becomes simple.
We need to treat the case $\mathcal{L}_0 \not=\{0\}$ with care.
Because $\ker(T^2-1) = \ker(T-1) \oplus \ker(T+1)$,
$\mathcal{L}_0$ is decomposed into
\begin{align*} 
\mathcal{L}_0 = \mathcal{L}_0^+ \oplus \mathcal{L}_0^- , 
\end{align*}
where $\mathcal{L}_0^\pm = d_A^*\ker(T\mp 1)$.
\begin{lemma}
\label{lemmaL0}
{\rm 
\begin{itemize}
\item[(1)] $\mathcal{L}_0 = \mathcal{A} \cap \mathcal{B} = d_B^* \ker(T^2-1)$. 
\item[(2)] For all $d_A^*f^\pm \in \mathcal{L}_0^\pm$,
	$d_A^* f^\pm = \pm d_B^* f^\pm$ holds.
\item[(3)] $\mathcal{L}_0^\pm \subset \mathcal{H}_\pm^S$ and
	\[ S (d_A^* f^\pm) =  \pm d_A^* f^\pm. \]
\end{itemize}
}
\end{lemma}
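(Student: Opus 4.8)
The plan is to establish part (2) first, since parts (1) and (3) then follow almost mechanically. Fix $f^\pm \in \ker(T \mp 1)$, so that $T f^\pm = \pm f^\pm$, and consider the single vector $w^\pm := d_A^* f^\pm \mp d_B^* f^\pm$. I would compute $\|w^\pm\|^2$ directly by expanding the inner product. Since $d_A^*$ and $d_B^*$ are isometries (Lemma \ref{lemmaAB}(4)), the two diagonal terms each equal $\|f^\pm\|^2$; for the cross terms I move one adjoint across the inner product and invoke $d_A d_B^* = d_B d_A^* = T$ (Lemma \ref{lemmaUTSC}(5)), so that each cross term equals $\langle f^\pm, T f^\pm\rangle = \pm\|f^\pm\|^2$. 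A short bookkeeping of signs (the cross terms in $w^\pm$ carry the sign $\mp$, cancelling the $\pm$ coming from $T f^\pm = \pm f^\pm$) then gives $\|w^\pm\|^2 = 0$, whence $d_A^* f^\pm = \pm d_B^* f^\pm$, which is exactly (2).

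Part (3) is then immediate. By Lemma \ref{lemmaAB}(1) we have $S d_A^* = d_B^*$, so $S(d_A^* f^\pm) = d_B^* f^\pm$; combining this with (2), rewritten as $d_B^* f^\pm = \pm d_A^* f^\pm$, yields $S(d_A^* f^\pm) = \pm d_A^* f^\pm$. This is precisely the statement that $d_A^* f^\pm \in \ker(S \mp 1) = \mathcal{H}^S_\pm$, which gives both the inclusion $\mathcal{L}_0^\pm \subset \mathcal{H}^S_\pm$ and the displayed identity.

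For part (1), I would first upgrade (2) to the subspace identity $d_A^* \ker(T^2-1) = d_B^* \ker(T^2-1)$. Writing any $f \in \ker(T^2-1)$ as $f = f^+ + f^-$ with $f^\pm \in \ker(T \mp 1)$ (using $\ker(T^2-1) = \ker(T-1) \oplus \ker(T+1)$), part (2) gives $d_A^* f = d_B^* f^+ - d_B^* f^- = d_B^*(f^+ - f^-)$, and $f^+ - f^-$ again lies in $\ker(T^2-1)$; the symmetric computation gives the reverse inclusion, so the two ranges coincide. This proves the right-hand equality $\mathcal{L}_0 = d_B^*\ker(T^2-1)$, and in particular $\mathcal{L}_0 \subset \mathcal{A} \cap \mathcal{B}$, since $d_A^*\mathcal{K} = \mathcal{A}$ and $d_B^*\mathcal{K} = \mathcal{B}$.

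The reverse inclusion $\mathcal{A} \cap \mathcal{B} \subset \mathcal{L}_0$ is the one genuinely substantive step. Given $\psi \in \mathcal{A} \cap \mathcal{B}$, I would use $\mathcal{A} = \mathrm{Ran}(d_A^*)$ and $\mathcal{B} = \mathrm{Ran}(d_B^*)$ together with $d_A d_A^* = d_B d_B^* = I$ to write $\psi = d_A^* f = d_B^* g$ with $f = d_A\psi$ and $g = d_B\psi$. Applying $d_A$ and then $d_B$ to the equality $d_A^* f = d_B^* g$ and using $d_A d_B^* = d_B d_A^* = T$ produces the coupled relations $f = Tg$ and $g = Tf$, hence $f = T^2 f$, i.e. $f \in \ker(T^2-1)$ and $\psi = d_A^* f \in \mathcal{L}_0$. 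I expect this last extraction---recognizing that membership in both ranges forces the preimage into $\ker(T^2-1)$ via the fixed-point identity $f = T^2 f$---to be the main obstacle, as everything else is a formal consequence of (2).
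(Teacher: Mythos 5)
Your proposal is correct and follows essentially the same route as the paper: the same three ingredients (isometry of $d_A^*$, $d_B^*$; the identity $d_A d_B^* = T$; the eigenvalue relation $Tf^\pm = \pm f^\pm$) and the identical range-membership argument $f = Tg$, $g = Tf \Rightarrow f = T^2 f$ for the inclusion $\mathcal{A} \cap \mathcal{B} \subset \mathcal{L}_0$. The only cosmetic difference is that you prove (2) first by expanding $\|d_A^* f^\pm \mp d_B^* f^\pm\|^2 = \|(1 \mp S)d_A^* f^\pm\|^2 = 0$ and then deduce (3) from $S d_A^* = d_B^*$, whereas the paper runs the same computation through the projections $Q_\pm$ of Lemma \ref{lemmaS} (note $1 \mp S = 2Q_\mp$, so the two calculations coincide) to get (3) before (2).
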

\begin{proof}
{\rm
We first prove $\mathcal{L}_0 \subset \mathcal{A} \cap \mathcal{B}$.
To this end, let $d_A^* f \in \mathcal{L}_0$ ($f \in \ker(T^2-1)$) and 
$f = f^+ + f^-$ ($f^\pm \in \ker(T\mp 1)$.
Then
we observe, from Lemma \ref{lemmaAB}, that
\[ \langle d_A^* f^\pm, S d_A^*f^\pm \rangle
	= \langle f^\pm, Tf^\pm \rangle = \pm \|f^\pm\|^2 
		= \pm \|d_A^* f^\pm\|^2. \] 
Hence, we have
\[ \langle d_A^*f^\pm, (S \mp 1) d_A^* f^\pm \rangle = 0. \] 
By Lemma \ref{lemmaS},
we have
\[ \| Q_\mp d_A^*f^\pm \| = 0. \]
Noting that $Q_\mp = 1- Q_\pm$,
we obtain
\[ d_A^* f^\pm = Q_\pm d_A^* f^\pm. \]
Hence, by Lemma \ref{lemmaAB}, 
\begin{equation} 
\label{*01}
d_B^* f^\pm = S d_A^* f^\pm 
	= S Q_\pm d_A^* f^\pm = \pm d_A^* f^\pm.
\end{equation} 
Thus, we have
\begin{equation}
\label{032611:11} 
d_A^* f = d_A^* (f^+ + f^-) 
	= d_B^* ( f^+ - f^-) \in \mathcal{A} \cap \mathcal{B}.
\end{equation}

We prove the converse statement.
To this end, let $\psi \in \mathcal{A} \cap \mathcal{B}$.
This can be represented in two ways:
\[ \psi = d_A^* f = d_B^* g, \quad f, g \in \mathcal{K}. \]
By Lemma \ref{lemmaAB} and Lemma \ref{lemmaUTSC},
we have
\[ f = T g, \quad g = T f, \]
which imply
\[ f = T^2 f, \quad g = T^2 g. \]
Thus, we know that $f, g \in \ker(T^2-1)$.
In particular, we have  $\psi \in \mathcal{L}_0$, 
and the converse statement 
$\mathcal{A} \cap \mathcal{B} \subset \mathcal{L}_0$ is proved.
Hence, we have $\mathcal{L}_0 = \mathcal{A} \cap \mathcal{B}$.
Moreover, from \eqref{032611:11},
we also have $\mathcal{L}_0 
	\subset d_B^* \ker (T^2-1)$.
In a way similar to the above,
we can show $d_B^*\ker (T^2-1) \subset \mathcal{A} \cap \mathcal{B}$.
Thus, (1) is proved.

\eqref{*01} implies (2) and (3). 
}
\end{proof}
\begin{lemma}
{\rm
\[ \mathcal{L} = \mathcal{L}_1 \oplus \mathcal{L}_0. \]
Moreover, for all $\psi \in \mathcal{L}$,
there exist unique $ f, g \in \ker(T^2 -1)^\perp$ and $h_0 \in \ker(T^2-1)$
such that
\begin{align} 
\label{decomp01}
\psi = d_A^* f + d_B^* g + d_A^* h_0. 
\end{align}
}
\end{lemma}
\begin{proof}
{\rm
We first prove that $\mathcal{L}_1 \perp \mathcal{L}_0$.
To this end, let $\psi_1 \in \mathcal{L}_1$ and $\psi_0 \in \mathcal{L}_0$.
Then $\psi_0$ and $\psi_1$ can be represented as
\begin{align*} 
& \psi_1 = d_A^* f + d_B^* g, \quad f,g \in \ker(T^2-1)^\perp, \\
& \psi_0 = d_A^* h_0, \quad h_0 \in \ker(T^2-1)
\end{align*}
By the decomposition $h_0 = h_0^+ + h_0^-$ ($h_0^\pm \in \ker(T\mp 1)$)
and Lemma \ref{lemmaL0}, 
we have
\begin{align*}
\langle \psi_1, d_A^* h_0^\pm \rangle
	& = \langle  d_A^* f + d_B^* g, d_A^* h_0^\pm \rangle
		=  \langle  d_A^* f, d_A^* h_0^\pm \rangle
			\pm  \langle  d_B^* g,  d_B^* h_0^\pm \rangle, \\
		& = \langle f \pm g, h_0^\pm \rangle = 0.
\end{align*}
Because $\psi_0 = d_A^* h^+ + d_A^* h^-$,
we obtain $\langle \psi_1, \psi_0 \rangle = 0$.
Hence, we have the desired result.

To prove \eqref{decomp01},
let $\psi \in \mathcal{L}$.
Then there exist $\tilde f, \tilde g \in \mathcal{K}$ such that
\[ \psi = d_A^* \tilde f + d_B^* \tilde g. \]
Decomposing $\tilde{f}$ and $\tilde{g}$ as
$\tilde f = f + f_0$, $\tilde g = g + g_0$ ($f,g \in \ker(T^2-1)^\perp$,
$f_0, g_0 \in \ker(T^2-1)$), 
we have
\[ \psi = d_A^* f + d_B^* g + d_A^* f_0 + d_B^* g_0. \]
Because $d_B^* g_0 \in \mathcal{L}_0$, 
decomposing $g_0$ as
$g_0 = g_0^+ + g_0^-$ ($g_0^\pm \in \ker(T\mp 1)$),
and using Lemma \ref{lemmaL0},
we obtain 
\[ d_B^* g_0 = d_A^* (g_0^+ - g_0^-). \]
Letting $h_0 = f_0  + g_0^+ - g_0^-$,
we have the decomposition \eqref{decomp01},
which implies $\mathcal{L} \subset \mathcal{L}_1 \oplus \mathcal{L}_0$.
Since the converse inclusion is clear,
we obtain $\mathcal{L} = \mathcal{L}_1 \oplus \mathcal{L}_0$.

We prove the uniqueness of the decomposition \eqref{decomp01}.
We assume that $\psi \in \mathcal{L}$
can be represented in two ways:
\begin{align*} 
\psi 
	& = d_A^* f + d_B^* g + d_A^* h_0 \\
	& =d_A^* f^\prime + d_B^* g^\prime + d_A^* h_0^\prime, 
\end{align*}
where $f,f^\prime, g, g^\prime \in \ker(T^2 -1)^\perp$
and $h_0, h_0^\prime \in \ker(T^2-1)$.
Then we have
\[ d_A^*(f -f^\prime + h_0 - h_0^\prime) 
	= d_B^* (g^\prime-g) \in \mathcal{A} \cap \mathcal{B}. \]
This implies $g^\prime-g \in \ker(T^2-1) \cap \ker(T^2-1)^\perp$
and hence $g^\prime=g$.
Moreover, we then know that
\[ d_A^*(f -f^\prime + h_0 - h_0^\prime)=0 \]
and hence that
\[ f- f^\prime = h_0^\prime-h_0 \in \ker(T^2-1)^\perp \cap \ker(T^2-1). \]
Thus we have $f^\prime=f$, $h_0^\prime = h_0$.
Hence, uniqueness is proved.
}
\end{proof}
\begin{lemma}
\label{lemmadAdB}
{\rm
\[ \mathcal{L}^\perp = \ker (d_A) \cap \ker (d_B). \]
}
\end{lemma}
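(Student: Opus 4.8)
The plan is to prove the claimed identity $\mathcal{L}^\perp = \ker(d_A) \cap \ker(d_B)$ by establishing the equivalent statement $\mathcal{L} = \overline{\mathcal{L}} = (\ker(d_A) \cap \ker(d_B))^\perp$, or more directly by a routine orthogonality argument. Recall that $\mathcal{L} = \mathcal{A} + \mathcal{B}$ with $\mathcal{A} = \operatorname{Ran}(d_A^*)$ and $\mathcal{B} = \operatorname{Ran}(d_B^*)$, and that $\ker(d_A) = \operatorname{Ran}(d_A^*)^\perp = \mathcal{A}^\perp$ and $\ker(d_B) = \operatorname{Ran}(d_B^*)^\perp = \mathcal{B}^\perp$, which were set up in Section~2. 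The key algebraic fact I would exploit is the general Hilbert-space identity
\begin{equation*}
(\mathcal{A} + \mathcal{B})^\perp = \mathcal{A}^\perp \cap \mathcal{B}^\perp,
\end{equation*}
valid for any two subspaces (since $\psi \perp (\mathcal{A}+\mathcal{B})$ iff $\psi \perp \mathcal{A}$ and $\psi \perp \mathcal{B}$). Substituting the identifications above gives exactly $\mathcal{L}^\perp = \mathcal{A}^\perp \cap \mathcal{B}^\perp = \ker(d_A) \cap \ker(d_B)$.

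Concretely, I would first unwind the definition: a vector $\psi \in \mathcal{H}$ lies in $\mathcal{L}^\perp$ if and only if $\langle \psi, d_A^* f \rangle = 0$ and $\langle \psi, d_B^* g \rangle = 0$ for all $f, g \in \mathcal{K}$. By the adjoint relation, $\langle \psi, d_A^* f \rangle = \langle d_A \psi, f \rangle$, so the first family of conditions holds for all $f$ precisely when $d_A \psi = 0$, that is, $\psi \in \ker(d_A)$. The identical argument with $d_B$ shows the second family is equivalent to $\psi \in \ker(d_B)$. Combining the two gives $\psi \in \ker(d_A) \cap \ker(d_B)$, and every step is reversible, so the two sets coincide.

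One small point to check carefully is whether one must pass to the closure of $\mathcal{L}$: the orthogonal complement of a subspace equals the orthogonal complement of its closure, so $\mathcal{L}^\perp = \overline{\mathcal{L}}^\perp$ and no harm arises from $\mathcal{A} + \mathcal{B}$ possibly failing to be closed. Since $\mathcal{A} = \operatorname{Ran}(\Pi_{\mathcal{A}})$ and $\mathcal{B} = \operatorname{Ran}(\Pi_{\mathcal{B}})$ are already closed (being ranges of the orthogonal projections $\Pi_\mathcal{A} = d_A^* d_A$ and $\Pi_\mathcal{B} = d_B^* d_B$ from Lemma~\ref{lemmaC} and its analog), the only subtlety is the sum, and the complement computation sidesteps it entirely.

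I do not anticipate a serious obstacle here; the statement is essentially a packaging of the elementary duality $\operatorname{Ran}(d^*)^\perp = \ker(d)$ together with $(\mathcal{A}+\mathcal{B})^\perp = \mathcal{A}^\perp \cap \mathcal{B}^\perp$. The only thing worth stating explicitly, rather than leaving implicit, is the reversibility of the biconditional $\langle d_A\psi, f\rangle = 0 \ \forall f \iff d_A \psi = 0$, which uses that $f$ ranges over all of $\mathcal{K}$. Given how short and mechanical this is, I would expect the authors to omit the proof or dispatch it in a single sentence.
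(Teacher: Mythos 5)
Your proposal is correct and follows essentially the same route as the paper: the authors' proof is exactly your concrete unwinding, taking $\psi\in\mathcal{L}^\perp$, testing against $d_A^*f$ and $d_B^*g$ separately, and using the adjoint relation $\langle\psi, d_A^*f\rangle=\langle d_A\psi, f\rangle$ with $f$ ranging over all of $\mathcal{K}$, with the converse inclusion being immediate. Your extra packaging via $(\mathcal{A}+\mathcal{B})^\perp=\mathcal{A}^\perp\cap\mathcal{B}^\perp$ and the remark that closures are irrelevant for orthogonal complements are harmless refinements of the same argument.
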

\begin{proof}
{\rm
We first prove that $\mathcal{L}^\perp \subset \ker(d_A) \cap \ker(d_B)$.
Let $\psi \in \mathcal{L}^\perp$.
Then
we have for all $d_A^* f + d_B^* g \in \mathcal{L}$
($f, g \in \mathcal{K}$)
\[ \langle \psi, d_A^* f \rangle = - \langle \psi, d_B^* g \rangle. \]
Let $g=0$ (resp.  $f=0$).
We have $\langle d_A\psi, f \rangle = 0$
for all $f \in \mathcal{K}$
(resp. $\langle d_B\psi, g\rangle = 0$ for all $g \in \mathcal{K}$).
Hence, we obtain $\psi \in \ker(d_A) \cap \ker(d_B)$.

Conversely, let $\psi \in \ker(d_A) \cap \ker(d_B)$.
We have
\[ \langle \psi, d_A^* f + d_B^* g \rangle = 0, \quad f, g \in \mathcal{K}. \]
Hence, we obtain $\psi \in \mathcal{L}^\perp$
and $\ker(d_A) \cap \ker(d_B) \subset \mathcal{L}^\perp$ is proved.
}
\end{proof}

\begin{proposition}
\label{propinv}
{\rm
$\overline{\mathcal{L}_1}$, $\mathcal{L}_0$ and $\mathcal{L}^\perp$
are invariant subspaces of $U$, {\it i.e.},
\[ U \mathcal{V} \subset \mathcal{V} \]
for $\mathcal{V} = \overline{\mathcal{L}_1}, \ \mathcal{L}_0$ 
and $\mathcal{L}^\perp$.
}
\end{proposition}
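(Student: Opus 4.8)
The plan is to handle the three subspaces separately, treating $\mathcal{L}_0$ and $\mathcal{L}^\perp$ by direct computation and $\overline{\mathcal{L}_1}$ by combining an invariance computation on $\mathcal{L}_1$ itself with the boundedness of $U$.

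For $\mathcal{L}_0 = d_A^*\ker(T^2-1)$ I would start from the decomposition $\ker(T^2-1)=\ker(T-1)\oplus\ker(T+1)$ and take a generator $d_A^* f^\pm$ with $f^\pm\in\ker(T\mp 1)$. By Lemma \ref{lemmaL0}(3) these satisfy $S d_A^* f^\pm = \pm d_A^* f^\pm$, and since $C d_A^* = d_A^*$ (Lemma \ref{lemmaUTSC}(1)) we also have $C d_A^* f^\pm = d_A^* f^\pm$. Hence $U d_A^* f^\pm = S C d_A^* f^\pm = S d_A^* f^\pm = \pm d_A^* f^\pm \in \mathcal{L}_0$, so $U$ acts as $\pm 1$ on the two summands and in particular leaves $\mathcal{L}_0$ invariant. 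Note $\mathcal{L}_0$ is already closed, being the image of the closed subspace $\ker(T^2-1)$ under the isometry $d_A^*$, so no closure is needed here.

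For $\mathcal{L}_1$ I would use the action formulas of Lemma \ref{lemmaUTSC}(3), namely $U d_A^* f = d_B^* f$ and $U d_B^* g = 2 d_B^* (Tg) - d_A^* g$. For a generator $d_A^* f + d_B^* g$ with $f,g\in\ker(T^2-1)^\perp$, the first formula sends $d_A^* f$ into the $d_B^*$-part of $\mathcal{L}_1$, and the second sends $d_B^* g$ to $2 d_B^*(Tg) - d_A^* g$. The only step that is not purely formal is checking that $Tg$ again lies in $\ker(T^2-1)^\perp$; this is the place I expect to be the main obstacle, and it is resolved by observing that $\ker(T^2-1)$ is a spectral, hence reducing, subspace of the self-adjoint contraction $T$, so $T$ commutes with the orthogonal projection onto it and therefore preserves $\ker(T^2-1)^\perp$. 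Granting this, $U$ maps $\mathcal{L}_1$ into itself, and since $U$ is bounded (indeed unitary) it maps the closure $\overline{\mathcal{L}_1}$ into $\overline{\mathcal{L}_1}$ by continuity.

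Finally, for $\mathcal{L}^\perp = \ker(d_A)\cap\ker(d_B)$ (Lemma \ref{lemmadAdB}) I would argue directly. If $d_A\psi = 0$ then $C\psi = (2 d_A^* d_A - 1)\psi = -\psi$, so $U\psi = S C \psi = -S\psi$. Using $d_A S = d_B$ together with $d_B S = d_A S^2 = d_A$ (Lemma \ref{lemmaAB}(1) and $S^2=1$), I obtain $d_A(S\psi) = d_B\psi = 0$ and $d_B(S\psi) = d_A\psi = 0$, whence $S\psi\in\mathcal{L}^\perp$ and therefore $U\psi = -S\psi\in\mathcal{L}^\perp$. Apart from the reducing-subspace fact for $T$ and the accompanying continuity step in the $\mathcal{L}_1$ case, everything else is bookkeeping with the identities collected in Lemmas \ref{lemmaC}, \ref{lemmaAB}, \ref{lemmaL0}, and \ref{lemmaUTSC}.
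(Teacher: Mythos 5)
Your proposal is correct and takes essentially the same route as the paper's own proof: the same generator computation $U(d_A^*f+d_B^*g)=d_B^*(f+2Tg)-d_A^*g$ for $\mathcal{L}_1$ (with the closure $\overline{\mathcal{L}_1}$ handled by boundedness of $U$), the identity $U d_A^* f^\pm = S d_A^* f^\pm = \pm d_A^* f^\pm$ via Lemma \ref{lemmaL0} for $\mathcal{L}_0$, and the observation $C\psi=-\psi$, hence $U\psi=-S\psi$, on $\mathcal{L}^\perp=\ker(d_A)\cap\ker(d_B)$. The only difference is cosmetic: you explicitly justify that $T$ preserves $\ker(T^2-1)^\perp$ (via self-adjointness of $T$ and invariance of $\ker(T^2-1)$), a step the paper's proof asserts without comment.
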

\begin{proof}
{\rm
We first prove that
$U \overline{\mathcal{L}_1} \subset \overline{\mathcal{L}_1}$.
It suffices to show that $U \mathcal{L}_1 \subset \mathcal{L}_1$.
To this end,
let $\psi \in \mathcal{L}_1$ and write
\[ \psi = d_A^* f + d_B^* g,
	\quad f \in \ker(T^2-1)^\perp, \ g \in \ker(T^2-1)^\perp. \]
By Lemma \ref{lemmaUTSC},
we know that
\begin{align*}
U \psi  = Ud_A^* f + Ud_B^* g 
	= d_B^* (f + 2Tg) -d_A^*g.
\end{align*}
Because $f + 2Tg \in \ker(T^2-1)^\perp$,
we have $U\psi \in \mathcal{L}_1$.
Hence $U$ leaves $\overline{\mathcal{L}_1}$ invariant.

We next prove that $U \mathcal{L}_0 \subset \mathcal{L}_0$.
Let $\psi = d_A^* (h_0^+ + h_0^-) \in \mathcal{L}_0$
($h_0^\pm \in \ker(T\mp1)$).
Then, by Lemma \ref{lemmaL0}, we have
\[ U\psi = d_B^* h_0^+ + d_B^* h_0^- = d_A^*(h_0^+ - h_0^-) \in \mathcal{L}_0. \]
Hence, we obtain the desired result.

We prove that $U  \mathcal{L}^\perp \subset \mathcal{L}^\perp$.
Combining Lemma \ref{lemmadAdB} with $d_B = d_A S$,
we know that
\begin{equation} 
\label{Lperp}
\mathcal{L}^\perp = \{ \psi \in \ell^2(D) \mid d_A \psi = 0, \ d_A (S \psi) = 0 \}.
\end{equation}
Since, by Lemma \ref{lemmaC}, 
we have
$\mathcal{L}^\perp \subset \ker (d_A) = {\rm Ran} P_-$,
we know that $C\psi = - \psi$ holds for all $\psi \in \mathcal{L}^\perp$.
Hence we have
\begin{equation} 
\label{ULperp}
U \psi = - S\psi, \quad \mbox{for all $\psi \in \mathcal{L}^\perp $}.
\end{equation}
We observe from \eqref{Lperp},
that $U\psi \in \mathcal{L}^\perp$.
This completes the proof.
}
\end{proof}

Proposition \ref{propinv} implies that
$U$ is reduced by the subspaces
$\overline{\mathcal{L}_1}, \ \mathcal{L}_0$ and $\mathcal{L}^\perp$
and is decomposed into
\[ U = U_{\overline{\mathcal{L}_1}} 
		\oplus U_{\mathcal{L}_0}  \oplus U_{\mathcal{L}^\perp}
, \]
where we have used $U_\mathcal{V}$ to denote the restriction of $U$
to a subspace $\mathcal{V}$.
Then we have
\begin{align}
\label{decomp02} 
& \sigma (U) = \sigma( U_{\overline{\mathcal{L}_1}}  ) \cup
	\sigma(U_{\mathcal{L}_0} ) \cup \sigma(U_{\mathcal{L}^\perp}), \\
\label{decomp03} 
& \sigma_{\rm p} (U) = \sigma_{\rm p}( U_{\overline{\mathcal{L}_1}}  ) \cup
	\sigma_{\rm p}(U_{\mathcal{L}_0} ) \cup \sigma_{\rm p}(U_{\mathcal{L}^\perp}).
\end{align}

\section{Eigenvalues of $U$}	
\subsection{Eigenspaces and invariant subspaces}
In this subsection,
we prove the following proposition:
\begin{proposition}
\label{prop*00}
{\rm
The following hold:
\begin{itemize}
\item[(1)] $\sigma_{\rm p}(U)\cap \{ \pm 1\} 
	\subset \sigma_{\rm p}(U_{{\mathcal{L}_0}}) \cup \sigma_{\rm p}(U_{\mathcal{L}^\perp})$.
\item[(2)] $\sigma_{\rm p}(U_{\mathcal{L}_0}) \subset \{\pm 1\}$.
\item[(3)] $\sigma_{\rm p}(U) \setminus \{\pm 1\} 
	= \sigma_{\rm p}(U_{\overline{\mathcal{L}_1}})
		\subset \{ e^{i\xi} \mid  \cos \xi \in \sigma_{\rm p}(T),
	 	\xi \in (0,\pi) \cup (\pi, 2\pi) \}$.
\end{itemize}
}
\end{proposition}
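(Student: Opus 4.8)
The plan is to exploit the reduction $U=U_{\overline{\mathcal{L}_1}}\oplus U_{\mathcal{L}_0}\oplus U_{\mathcal{L}^\perp}$ from Proposition~\ref{propinv} together with \eqref{decomp03}, handling each summand separately. The two ``easy'' summands come first. For (2), note $\mathcal{L}_0=\mathcal{A}\cap\mathcal{B}\subset\mathcal{A}=\ker(C-1)$ by Lemma~\ref{lemmaL0}(1) and Lemma~\ref{lemmaC}(2), so $C\psi=\psi$ and hence $U\psi=S\psi$ for every $\psi\in\mathcal{L}_0$; since $S=\pm1$ on $\mathcal{L}_0^\pm$ by Lemma~\ref{lemmaL0}(3), we get $U_{\mathcal{L}_0}=I\oplus(-I)$ on $\mathcal{L}_0^+\oplus\mathcal{L}_0^-$ and thus $\sigma_{\rm p}(U_{\mathcal{L}_0})\subset\{\pm1\}$. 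Likewise $U=-S$ on $\mathcal{L}^\perp$ by \eqref{ULperp}, so $\sigma_{\rm p}(U_{\mathcal{L}^\perp})\subset\{\pm1\}$ as well, $S$ being a unitary involution (Lemma~\ref{lemmaS}).

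The heart of the argument is the summand $U_{\overline{\mathcal{L}_1}}$. I would start from a nonzero eigenvector $\psi\in\overline{\mathcal{L}_1}$ with $U\psi=\lambda\psi$, noting $|\lambda|=1$ since $U$ is unitary. Using the identities $d_BU=d_A$ and $d_AU=2Td_A-d_B$ (immediate from $d_BS=d_A$, $d_AS=d_B$ and Lemma~\ref{lemmaUTSC}(1),(2)), I apply $d_A$ and $d_B$ to the eigenvalue equation; writing $\mathbf{a}=d_A\psi$, $\mathbf{b}=d_B\psi$ this gives
\[ 2T\mathbf{a}-\mathbf{b}=\lambda\mathbf{a},\qquad \mathbf{a}=\lambda\mathbf{b}. \]
Eliminating $\mathbf{a}$ yields $2\lambda T\mathbf{b}=(\lambda^2+1)\mathbf{b}$, that is, $T\mathbf{b}=\varphi(\lambda)\mathbf{b}$. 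If $\mathbf{b}=0$, then $\mathbf{a}=\lambda\mathbf{b}=0$, so $\psi\in\ker(d_A)\cap\ker(d_B)=\mathcal{L}^\perp$ by Lemma~\ref{lemmadAdB}; but $\psi\in\overline{\mathcal{L}_1}\subset\overline{\mathcal{L}}$ while $\mathcal{L}^\perp=(\overline{\mathcal{L}})^\perp$, forcing $\psi=0$, a contradiction. Hence $\mathbf{b}\ne0$, so $\varphi(\lambda)\in\sigma_{\rm p}(T)$; writing $\lambda=e^{i\xi}$ gives $\cos\xi=\varphi(\lambda)\in\sigma_{\rm p}(T)$.

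It remains to exclude $\lambda=\pm1$, and this is the delicate point, since the eigenvector lies only in the closure $\overline{\mathcal{L}_1}$ and need not admit a representation $\psi=d_A^*f+d_B^*g$. The plan is to verify the membership on the dense part and then pass to the limit: for $f,g\in\ker(T^2-1)^\perp$ one has $d_B(d_A^*f+d_B^*g)=Tf+g\in\ker(T^2-1)^\perp$, because $\ker(T^2-1)^\perp$ reduces $T$; since this subspace is closed and $d_B$ is bounded, $\mathbf{b}=d_B\psi\in\ker(T^2-1)^\perp$ for all $\psi\in\overline{\mathcal{L}_1}$. Combined with $T\mathbf{b}=\varphi(\lambda)\mathbf{b}$ and $\mathbf{b}\ne0$, this rules out $\varphi(\lambda)=\pm1$, for otherwise $\mathbf{b}\in\ker(T\mp1)\subset\ker(T^2-1)$ would meet its own orthogonal complement. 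Thus $\xi\in(0,\pi)\cup(\pi,2\pi)$, which proves (3b) and, in particular, $\sigma_{\rm p}(U_{\overline{\mathcal{L}_1}})\cap\{\pm1\}=\emptyset$.

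Finally, intersecting \eqref{decomp03} with $\{\pm1\}$ and using $\sigma_{\rm p}(U_{\overline{\mathcal{L}_1}})\cap\{\pm1\}=\emptyset$ yields (1), while removing $\{\pm1\}$ from \eqref{decomp03} and invoking $\sigma_{\rm p}(U_{\mathcal{L}_0})\cup\sigma_{\rm p}(U_{\mathcal{L}^\perp})\subset\{\pm1\}$ yields the equality $\sigma_{\rm p}(U)\setminus\{\pm1\}=\sigma_{\rm p}(U_{\overline{\mathcal{L}_1}})$ in (3). I expect the main obstacle to be exactly the closure step for $\lambda=\pm1$, namely establishing $d_B\overline{\mathcal{L}_1}\subset\ker(T^2-1)^\perp$ and ensuring $\mathbf{b}\ne0$ survives the passage to $\overline{\mathcal{L}_1}$; the algebraic reduction to $T\mathbf{b}=\varphi(\lambda)\mathbf{b}$ is itself routine.
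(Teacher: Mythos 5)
Your proposal is correct, but it is organized genuinely differently from the paper's proof, so a comparison is in order. The paper does not argue summand-by-summand: it takes an arbitrary eigenvector $\psi$ of the full operator $U$, decomposes it as $\psi=d_A^*f+\psi_0$ with $\psi_0\in\ker d_A$ along Lemma~\ref{lemmaC}(4), derives $(d_B^*-\lambda d_A^*)f=(S+\lambda)\psi_0$ as in \eqref{*02}, and applies $d_A$ and $d_B$ to obtain $(T-\lambda)f=d_B\psi_0=(\bar\lambda-T)f$, hence $f\in\ker(T-\mathrm{Re}\,\lambda)$; it then splits cases on $\lambda$. For $\lambda=\pm1$ it gets $d_B\psi_0=0$, so $\psi\in\mathcal{L}_0\oplus\mathcal{L}^\perp=\overline{\mathcal{L}_1}^{\,\perp}$, which gives (1); for $\lambda\neq\pm1$ it inverts $S+\lambda$ explicitly via \eqref{327946} to solve for $\psi_0$ and arrives at the closed form $\psi=\frac{2}{1-\lambda^2}(d_A^*-\lambda d_B^*)f\in\mathcal{L}_1$, which gives (3). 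You instead stay inside each reducing subspace: your intertwining identities $d_BU=d_A$ and $d_AU=2Td_A-d_B$ (both correct, from $S^2=1$, $d_AC=d_A$ and Lemma~\ref{lemmaUTSC}) applied to $U\psi=\lambda\psi$ yield $T(d_B\psi)=\varphi(\lambda)\,d_B\psi$ directly; $d_B\psi\neq0$ because $d_A\psi=d_B\psi=0$ would put $\psi\in\mathcal{L}^\perp\cap\overline{\mathcal{L}}=\{0\}$; and you exclude $\lambda=\pm1$ by the closure observation $d_B\overline{\mathcal{L}_1}\subset\ker(T^2-1)^\perp$, which is valid since $\ker(T^2-1)^\perp$ is $T$-invariant and closed, $d_Bd_A^*=T$, $d_Bd_B^*=I_{\mathcal{K}}$, and $d_B$ is bounded. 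What the paper's route buys is the explicit eigenvector formula $(d_A^*-\lambda d_B^*)f$, which is recycled in the proof of Proposition~\ref{prop*03} to build the bijection $K_\lambda$ and count multiplicities in Theorem~\ref{Mainthm01}; your route does not produce this formula, though it is not needed for the present proposition. What your route buys is economy and robustness: no case split, no inversion of $S+\lambda$, and eigenvectors lying only in the closure $\overline{\mathcal{L}_1}$ are handled head-on rather than implicitly; moreover, combined with the reduction of $U$ by the three subspaces, your conclusion $\sigma_{\rm p}(U_{\overline{\mathcal{L}_1}})\cap\{\pm1\}=\emptyset$ still yields the eigenspace localization $\ker(U\mp1)\subset\mathcal{L}_0\oplus\mathcal{L}^\perp$ on which the paper's later multiplicity count rests. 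Your treatment of (2), via $U=S=\pm1$ on $\mathcal{L}_0^\pm$, simply front-loads the computation the paper defers to Proposition~\ref{prop*01}, which is harmless.
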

Let $\lambda \in \mathbb{C}$ be an eigenvalue of $U$
and $\psi \in \mathcal{H}$ its eigenvector:
\begin{equation}
\label{*002*} 
U \psi = \lambda \psi, \quad \psi \not=0. 
\end{equation}
Because $U$ is unitary,
$|\lambda| = 1$.
Using the decomposition (4) of Lemma \ref{lemmaC},
we can write 
\begin{equation}
\label{*001*} 
\psi = d_A^* f + \psi_0, \quad f \in \mathcal{K}, \ \psi_0 \in \ker d_A. 
\end{equation}

\begin{lemma}
{\rm
Let $f, \psi_0$ and $\lambda$ be as above.
Then,
\begin{align}
\label{**01}
& (T-\lambda)f 
	= d_B \psi_0, \\
\label{**02}
& (\bar\lambda -  T)f 
	= d_B\psi_0.
\end{align}
}
\end{lemma}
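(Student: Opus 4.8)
The plan is to compute $U\psi$ directly using the decomposition \eqref{*001*} and the action formulas collected in Lemma \ref{lemmaUTSC}, then extract two scalar-level identities in $\mathcal{K}$ by applying the coisometries $d_A$ and $d_B$. First I would write $U\psi = U d_A^* f + U\psi_0$. By Lemma \ref{lemmaUTSC}(3), $Ud_A^* f = d_B^* f$. For the second term, since $\psi_0 \in \ker d_A = \mathrm{Ran}\,P_- = \mathcal{A}^\perp$, Lemma \ref{lemmaC} gives $C\psi_0 = -\psi_0$, so $U\psi_0 = SC\psi_0 = -S\psi_0$. Thus the eigenvalue equation \eqref{*002*} reads
\begin{equation*}
d_B^* f - S\psi_0 = \lambda(d_A^* f + \psi_0).
\end{equation*}

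Next I would apply $d_A$ to both sides of this relation. Using $d_A d_B^* = T$ (Lemma \ref{lemmaUTSC}(5)), $d_A d_A^* = I_{\mathcal{K}}$ (the coisometry property), $d_A S = d_B$, and $d_A \psi_0 = 0$ (since $\psi_0 \in \ker d_A$), the left side becomes $Tf - d_B\psi_0$ and the right side becomes $\lambda f$. This yields
\begin{equation*}
Tf - d_B\psi_0 = \lambda f,
\end{equation*}
which rearranges to $(T-\lambda)f = d_B\psi_0$, proving \eqref{**01}. To obtain \eqref{**02}, I would instead apply $d_B = d_A S$ to the eigenvalue relation. Here $d_B d_B^* = I_{\mathcal{K}}$ (Lemma \ref{lemmaAB}(2)), $d_B d_A^* = T$ (Lemma \ref{lemmaUTSC}(5)), and $d_B S\psi_0 = d_A S^2 \psi_0 = d_A\psi_0 = 0$ using $S^2=1$; also $d_B\psi_0$ on the right must be handled by noting $d_B d_A^* = T$ so $d_B(\lambda d_A^* f) = \lambda T f$, while $d_B(\lambda \psi_0) = \lambda\, d_B\psi_0$. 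Collecting terms gives $f = \lambda T f + \lambda\, d_B\psi_0$, i.e. $(\bar\lambda - T)f = d_B\psi_0$ after multiplying by $\bar\lambda = \lambda^{-1}$ (valid since $|\lambda|=1$).

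The only genuinely delicate point is the bookkeeping in the second computation: one must remember that $|\lambda|=1$ forces $\bar\lambda = \lambda^{-1}$, which is what converts the factor $\lambda$ multiplying $Tf$ into the clean form $(\bar\lambda - T)f$ on the left; without unitarity of $U$ this normalization would fail. Everything else is a mechanical substitution from Lemma \ref{lemmaUTSC} and the coisometry identities, so I expect no real obstacle beyond tracking which of $d_A$ or $d_B$ annihilates $\psi_0$ versus $S\psi_0$. I would present the $d_A$-computation and the $d_B$-computation as two short parallel displays, each a single chain of equalities, to keep the argument transparent.
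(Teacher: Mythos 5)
Your proof is correct and follows essentially the same route as the paper: compute $U\psi = d_B^*f - S\psi_0$ via $C\psi_0 = -\psi_0$, then apply $d_A$ and $d_B$ to the resulting eigenvalue relation, using $d_Ad_B^* = d_Bd_A^* = T$, $d_AS = d_B$, $d_BS\psi_0 = d_A\psi_0 = 0$, and $\bar\lambda = \lambda^{-1}$. The only cosmetic difference is that the paper first rearranges the relation into $(d_B^* - \lambda d_A^*)f = (S+\lambda)\psi_0$ before applying the coisometries, which changes nothing of substance.
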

\begin{proof}
Because $\ker d_A = {\rm Ran} P_-$ from Lemma \ref{lemmaC}, 
we have $C\psi_0 = - \psi_0$.
Substituting \eqref{*001*} into \eqref{*002*},
we have
\begin{equation}
\label{*1227} 
U\psi = U(d_A^*f + \psi_0) = d_B^*f - S\psi_0.
\end{equation}
Hence, it holds, from \eqref{*002*}, that
\begin{equation}
\label{*02}
(d_B^*- \lambda d_A^*) f = (S + \lambda) \psi_0.
\end{equation}
%
%
Letting $d_A$ and $d_B$ act on
\eqref{*02}, 
we obtain 
\begin{align*}
& (T-\lambda)f 
	= d_A (S + \lambda) \psi_0 
	= d_B \psi_0, \\
& (1 - \lambda T)f 
	= d_B(S + \lambda) \psi_0
	= \lambda d_B\psi_0. 
\end{align*}
Noting that $\bar{\lambda} = \lambda^{-1}$ holds
from $|\lambda|=1$,
we have the desired result.
\end{proof}
\begin{proof}[Proof of Proposition \ref{prop*00}]
Let $f, \psi_0$ and $\lambda$ be as above.
Combining \eqref{**01} with \eqref{**02},
we have $(T- {\rm Re}\lambda) f = 0$, 
and hence
\[ f \in \ker (T -{\rm Re}\lambda). \]

We first consider the case in which $\lambda = \pm 1$.
Then, we have $f \in \ker(T \mp 1)$.
Hence, by \eqref{**01}, 
we obtain $d_B \psi_0 = 0$ and
\[ \psi_0 \in \ker (d_A) \cap \ker (d_B). \]
Because $\ker(T \mp 1) \subset \ker (T^2-1)$,
we get
\[ \psi = d_A^* f + \psi_0 \in \mathcal{L}_0 \oplus \mathcal{L}^\perp 
	= \mathcal{L}_1^\perp. \]
If $f\not=0$, 
then $f \in \ker(T\mp1)$ is an eigenvector of $T$.
If $f=0$, then $\psi_0 \not=0$, because $\psi \not=0$.
By \eqref{*02}, we have $S\psi = -\lambda \psi$:
therefore, we observe from \eqref{*1227} that $U\psi_0  = \lambda \psi_0$.
Hence we know that
\[ \sigma_{\rm p}(U)\cap \{ \pm 1\} 
	\subset \sigma_{\rm p}(U_{\mathcal{L}_0}) \cup \sigma_{\rm p}(U_{\mathcal{L}^\perp}),
	\quad
		\sigma_{\rm p} (U_{\overline{\mathcal{L}_1}})
	\cap \{\pm 1\} = \emptyset. \]

Let us next consider the case where 
$\lambda = e^{i\xi}$ ($\xi \in (0,\pi) \cup (\pi, 2\pi)$).
Then it holds that
\[  f \in \ker (T-\cos \xi) \subset \ker(T^2-1)^\perp.
	\]
Since  $\lambda \not= \pm 1$, we observe that
$S+\lambda$ has a bounded inverse with
\begin{equation} 
\label{327946}
(S + \lambda)^{-1} = \frac{S-\lambda}{1-\lambda^2}. 
\end{equation}
Hence, by \eqref{*02}, we have
\[ \psi_0 = (S+\lambda)^{-1}(d_B^* - \lambda d_A^*) f
	= \frac{1}{1-\lambda^2} \left( (1+\lambda^2) d_A^* -2\lambda d_B^* \right)f. \]
Thus we obtain
\begin{equation*} 
\psi = d_A^* f + \psi_0
	= \frac{2}{1-\lambda^2} \left(d_A^* -\lambda d_B^* \right)f \in \mathcal{L}_1,
\end{equation*}
which implies $f\not=0$, because $\psi \not=0$.
Therefore we know that
$f$ is an eigenvector of $T$ corresponding to ${\rm Re}\lambda = \cos \xi$
and
\[ \sigma_{\rm p}(U) \setminus \{\pm 1\} 
	= \sigma_{\rm p}(U_{\overline{\mathcal{L}_1}}). \]
\end{proof}

\subsection{Eigenvalues of $U_{\mathcal{L}_0}$}
Let $m_{\pm} = {\rm dim}\ker (T \mp 1)$. 
We use $\{ \pm 1\}^{m_\pm}$ to denote multiplicity if $m_\pm > 0$
and use the convention that $\{ \pm 1\}^{m_\pm}= \emptyset$ if $m_\pm = 0$.
Our purpose in this subsection is to prove
\begin{proposition}
\label{prop*01}
{\rm
The following hold:
\begin{itemize}
\item[(1)] $U_{\mathcal{L}_0} = I_{\mathcal{L}_0^+} \oplus (- I_{\mathcal{L}_0^-})$.
\item[(2)] $\sigma(U_{\mathcal{L}_0}) =  \sigma_{\rm p}(U_{\mathcal{L}_0}) 
	= \{1\}^{m_+} \cup \{-1\}^{m_{-}}$.
\end{itemize}
}
\end{proposition}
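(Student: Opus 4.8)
The plan is to compute the action of $U$ directly on the two orthogonal pieces $\mathcal{L}_0^+$ and $\mathcal{L}_0^-$ of $\mathcal{L}_0$, using only the identities already available from Lemma \ref{lemmaUTSC} and Lemma \ref{lemmaL0}; part (2) is then a transparent bookkeeping consequence of part (1). Invariance of $\mathcal{L}_0$ under $U$ is already guaranteed by Proposition \ref{propinv}, so $U_{\mathcal{L}_0}$ is a well-defined operator on $\mathcal{L}_0$.

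First I would record that the decomposition $\mathcal{L}_0 = \mathcal{L}_0^+ \oplus \mathcal{L}_0^-$ is genuinely orthogonal. Indeed, for $f^+ \in \ker(T-1)$ and $f^- \in \ker(T+1)$, the coisometry relation $d_A d_A^* = I_\mathcal{K}$ gives $\langle d_A^* f^+, d_A^* f^- \rangle = \langle f^+, f^- \rangle = 0$, since $f^+$ and $f^-$ are eigenvectors of the self-adjoint operator $T$ for the distinct eigenvalues $+1$ and $-1$. The same isometry identity shows that $d_A^*$ maps $\ker(T\mp 1)$ isometrically onto $\mathcal{L}_0^\pm$, so that $\dim \mathcal{L}_0^\pm = \dim\ker(T\mp 1) = m_\pm$; I will need this dimension count at the very end.

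Next, for the action of $U$, let $\psi = d_A^* f^\pm$ with $f^\pm \in \ker(T\mp 1)$. By Lemma \ref{lemmaUTSC}(3) we have $U d_A^* = d_B^*$, so $U\psi = d_B^* f^\pm$; and by Lemma \ref{lemmaL0}(2) the relation $d_A^* f^\pm = \pm d_B^* f^\pm$ rearranges to $d_B^* f^\pm = \pm d_A^* f^\pm = \pm \psi$. Hence $U$ acts as $+I$ on $\mathcal{L}_0^+$ and as $-I$ on $\mathcal{L}_0^-$, which is exactly assertion (1), namely $U_{\mathcal{L}_0} = I_{\mathcal{L}_0^+} \oplus (-I_{\mathcal{L}_0^-})$.

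Finally, (2) follows immediately: a direct sum of $+I$ on a space of dimension $m_+$ and $-I$ on a space of dimension $m_-$ has spectrum equal to its point spectrum, namely $\{1\}$ with multiplicity $m_+$ (when $m_+ > 0$) together with $\{-1\}$ with multiplicity $m_-$ (when $m_- > 0$), in agreement with the stated convention $\{\pm 1\}^{m_\pm} = \emptyset$ when $m_\pm = 0$. There is no substantive obstacle here; the only point demanding any care is the multiplicity count, which hinges on $d_A^*$ being an isometry so that it neither collapses nor inflates the eigenspaces $\ker(T \mp 1)$ when forming $\mathcal{L}_0^\pm$ — and this count remains valid even when $m_\pm = \infty$, as occurs in the graph examples.
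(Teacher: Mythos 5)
Your proof is correct and takes essentially the same route as the paper: your computation $U\psi = d_B^* f^\pm = \pm d_A^* f^\pm = \pm\psi$ is identical to the paper's $U(d_A^* h_0^\pm) = S(d_A^* h_0^\pm) = \pm d_A^* h_0^\pm$, since $d_B^* = S d_A^*$ and the paper invokes Lemma \ref{lemmaL0}(3) where you invoke the equivalent Lemma \ref{lemmaL0}(2). Your dimension count via the isometry of $d_A^*$ on $\ker(T\mp 1)$ is precisely the content of the paper's Lemma \ref{lemmaisom} (there phrased as a bijection with explicit inverse $d_A\mid_{\mathcal{L}_0^\pm}$), and it correctly handles the infinite-dimensional case.
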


We need the following lemma.
\begin{lemma}
\label{lemmaisom}
{\rm
$d_A^*\mid_{\ker(T \mp 1)}$
is a bijection with the inverse 
\[ (d_A^*\mid_{\ker(T \mp 1)})^{-1} = d_A\mid_{\mathcal{L}_0^\pm}. \]
}
\end{lemma}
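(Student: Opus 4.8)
I want to show that $d_A^*\mid_{\ker(T\mp 1)}$ maps $\ker(T\mp 1)$ bijectively onto $\mathcal{L}_0^\pm$, with inverse $d_A\mid_{\mathcal{L}_0^\pm}$.

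Let me set up. $\mathcal{L}_0^\pm = d_A^*\ker(T\mp 1)$ by definition. So $d_A^*\mid_{\ker(T\mp 1)}$ certainly maps onto $\mathcal{L}_0^\pm$ — surjectivity is basically the definition. The content is injectivity plus the identification of the inverse.

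For injectivity: $d_A^*$ is an isometry (stated in the Preliminaries, and in Lemma 2.4(4)), hence globally injective, so its restriction to $\ker(T\mp 1)$ is injective. The real point is to verify $d_A\mid_{\mathcal{L}_0^\pm}$ is the two-sided inverse. One direction is free: for $f\in\ker(T\mp 1)$, $d_A(d_A^* f) = (d_A d_A^*)f = f$ by the coisometry relation $d_A d_A^* = I_{\mathcal{K}}$. This shows $d_A\mid_{\mathcal{L}_0^\pm}\circ d_A^*\mid_{\ker(T\mp 1)} = \mathrm{id}$ on $\ker(T\mp1)$. For the other direction I need to show $d_A^* (d_A \psi) = \psi$ for every $\psi \in \mathcal{L}_0^\pm$.

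So let me think about what I'd actually write. Take $\psi = d_A^* f$ with $f \in \ker(T\mp 1)$; I want $d_A^*(d_A\psi) = \psi$. But $d_A\psi = d_A d_A^* f = f$, so $d_A^*(d_A\psi) = d_A^* f = \psi$ — done, immediately. The composition $d_A^*\circ d_A$ equals the identity on the image $d_A^*\mathcal{K}$ precisely because $d_A^* d_A = \Pi_{\mathcal{A}}$ is the projection onto $\mathcal{A} = \mathrm{Ran}(d_A^*)$, which fixes $\mathcal{A}$ pointwise, and $\mathcal{L}_0^\pm \subset \mathcal{A}$. So both compositions reduce to the coisometry/projection identities already recorded in Lemma 2.1 and Lemma 2.4. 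There is essentially no obstacle here: the lemma is a packaging of $d_A d_A^* = I$ together with the fact that $d_A^* d_A$ is the identity on $\mathrm{Ran}(d_A^*)\supset \mathcal{L}_0^\pm$. The only thing requiring a prior result is that $\mathcal{L}_0^\pm \subset \mathcal{A}$, which follows since $\mathcal{L}_0^\pm = d_A^*\ker(T\mp 1)\subset \mathrm{Ran}(d_A^*) = \mathcal{A}$.

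In short, the plan is: first note surjectivity is immediate from the definition $\mathcal{L}_0^\pm = d_A^*\ker(T\mp 1)$; second, use $d_A d_A^* = I_{\mathcal{K}}$ to get $d_A\mid_{\mathcal{L}_0^\pm}\circ d_A^*\mid_{\ker(T\mp1)} = \mathrm{id}_{\ker(T\mp1)}$; third, observe $\mathcal{L}_0^\pm\subset\mathcal{A} = \mathrm{Ran}(d_A^*)$ so that $d_A^* d_A = \Pi_{\mathcal{A}}$ acts as the identity there, giving $d_A^*\mid_{\ker(T\mp1)}\circ d_A\mid_{\mathcal{L}_0^\pm} = \mathrm{id}_{\mathcal{L}_0^\pm}$. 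The two identities together establish bijectivity and exhibit the claimed inverse. I do not expect any genuine difficulty; this lemma is bookkeeping that will be used in the next proposition to transport the eigenvalue $\pm 1$ of $T$ over to $U_{\mathcal{L}_0}$, so the only care needed is to state the domains and codomains of the restrictions correctly.
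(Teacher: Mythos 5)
Your proof is correct and takes essentially the same route as the paper: both arguments reduce the lemma to the two composition identities $d_A\mid_{\mathcal{L}_0^\pm}\,d_A^*\mid_{\ker(T\mp 1)} = I_{\ker(T\mp 1)}$ and $d_A^*\mid_{\ker(T\mp 1)}\,d_A\mid_{\mathcal{L}_0^\pm} = I_{\mathcal{L}_0^\pm}$, verified on elements $d_A^* h_0^\pm$ via the coisometry relation $d_A d_A^* = I_{\mathcal{K}}$. Your added remark that $d_A^* d_A = \Pi_{\mathcal{A}}$ fixes $\mathcal{L}_0^\pm \subset \mathcal{A}$ pointwise is merely a repackaging of the same computation the paper writes out.
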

\begin{proof}
{\rm
It suffices to show that
\[ d_A^*\mid_{\ker(T \mp 1)}d_A\mid_{\mathcal{L}_0^\pm} = I_{\mathcal{L}_0^\pm},
	\quad d_A\mid_{\mathcal{L}_0^\pm} d_A^*\mid_{\ker(T \mp 1)} = I_{\ker(T \mp 1)}. \]
Let  $d_A^* h_0^\pm \in \mathcal{L}_0^\pm$.
Then, we have
$d_A (d_A^* h_0^\pm) = h_0^\pm \in \ker (T \mp 1)$ and
\[ d_A^*(d_A (d_A^* h_0^\pm) ) = d_A^* h_0^\pm, \]
which implies the former equation.

Conversely,
for all $h_0^\pm \in \ker (T \mp 1)$,
we have
$d_A^* h_0^\pm \in \mathcal{L}_0^\pm$ and
\[ d_A(d_A^* h_0^\pm) = h_0^\pm,  \]
which implies the latter equation.
}
\end{proof}
\begin{proof}[Proof of Proposition \ref{prop*01}]
Using Lemma \ref{lemmaUTSC} and Lemma \ref{lemmaL0},
we obtain the following: 
for $d_A^* h_0^\pm \in \mathcal{L}_0^\pm$,
\begin{equation}
\label{eqUSpm1}
U (d_A^* h_0^\pm) = S (d_A^* h_0^\pm) = \pm d_A^* h_0^\pm,
\end{equation}
which implies that $U_{\mathcal{L}_0}$ leaves $\mathcal{L}_0^\pm$ invariant
and (1) holds. 
By Lemma \ref{lemmaisom},
we have ${\rm dim} \ker (T \mp 1) = {\rm dim} \mathcal{L}_0^\pm$;
therefore, (1) leads to (2).
This completes the proof.
\end{proof}

\subsection{Eigenvalues of $U_{\mathcal{L}^\perp}$}
Let $\mathcal{L}^\perp_\pm = \mathcal{L}^\perp \cap \mathcal{H}^S_\mp$
and $M_{\pm} = {\rm dim }  \mathcal{L}^\perp_\pm$.
We prove the following:
\begin{proposition}
\label{prop*02}
{\rm
The following hold:
\begin{itemize}
\item[(1)]  $U_{\mathcal{L}^\perp} = I_{\mathcal{L}^\perp_+} 
	\oplus (- I_{\mathcal{L}^\perp_-})$.
\item[(2)] $\sigma(U_{\mathcal{L}^\perp}) = \sigma_{\rm p}(U_{\mathcal{L}^\perp})
	= \{1 \}^{M_+} \cup \{-1\}^{M_{-}} $.
\end{itemize}
}
\end{proposition}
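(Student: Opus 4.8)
**Proof proposal for Proposition \ref{prop*02}.**

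The plan is to exploit the key identity \eqref{ULperp} established in Proposition \ref{propinv}, namely $U\psi = -S\psi$ for all $\psi \in \mathcal{L}^\perp$, which reduces the spectral analysis of $U_{\mathcal{L}^\perp}$ entirely to that of the involution $S$ restricted to $\mathcal{L}^\perp$. First I would observe that $\mathcal{L}^\perp$ is $S$-invariant: by Lemma \ref{lemmadAdB} and \eqref{Lperp} we have $\mathcal{L}^\perp = \ker(d_A) \cap \ker(d_A S)$, and applying $S$ to an element $\psi \in \mathcal{L}^\perp$ gives $d_A(S\psi) = 0$ (from $d_A S \psi = 0$) and $d_A S(S\psi) = d_A \psi = 0$ (using $S^2 = 1$), so $S\psi \in \mathcal{L}^\perp$ as well. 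Hence $S$ restricts to a unitary involution on $\mathcal{L}^\perp$, and by Lemma \ref{lemmaS} it is diagonalized by the eigenprojections $Q_\pm = (1\pm S)/2$ onto $\mathcal{H}^S_\pm = \ker(S \mp 1)$.

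Next I would identify the two pieces explicitly. By definition $\mathcal{L}^\perp_\pm = \mathcal{L}^\perp \cap \mathcal{H}^S_\mp$, so $\mathcal{L}^\perp_+ = \mathcal{L}^\perp \cap \ker(S+1)$ consists of vectors with $S\psi = -\psi$, while $\mathcal{L}^\perp_- = \mathcal{L}^\perp \cap \ker(S-1)$ consists of vectors with $S\psi = \psi$. Since $S$ restricted to the invariant subspace $\mathcal{L}^\perp$ has spectrum contained in $\{\pm 1\}$, we get the orthogonal decomposition $\mathcal{L}^\perp = \mathcal{L}^\perp_+ \oplus \mathcal{L}^\perp_-$. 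Now for $\psi \in \mathcal{L}^\perp_+$ we compute $U\psi = -S\psi = -(-\psi) = \psi$, and for $\psi \in \mathcal{L}^\perp_-$ we compute $U\psi = -S\psi = -\psi$. This yields $U_{\mathcal{L}^\perp} = I_{\mathcal{L}^\perp_+} \oplus (-I_{\mathcal{L}^\perp_-})$, which is exactly (1). The sign bookkeeping here is the only place to be careful, since the index convention on $\mathcal{L}^\perp_\pm$ is deliberately crossed relative to $\mathcal{H}^S_\pm$ precisely so that $U$ (rather than $S$) acts as $\pm 1$ on $\mathcal{L}^\perp_\pm$.

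Finally, (2) follows immediately from (1): a direct sum of $I$ and $-I$ on the respective subspaces has pure point spectrum, with eigenvalue $1$ of multiplicity $\dim \mathcal{L}^\perp_+ = M_+$ and eigenvalue $-1$ of multiplicity $\dim \mathcal{L}^\perp_- = M_-$, so $\sigma(U_{\mathcal{L}^\perp}) = \sigma_{\rm p}(U_{\mathcal{L}^\perp}) = \{1\}^{M_+} \cup \{-1\}^{M_-}$, with the convention that the set is empty when the corresponding multiplicity is zero. I do not anticipate a serious obstacle; the substance of the argument was already done in Proposition \ref{propinv} when \eqref{ULperp} was derived, and the remaining work is just the diagonalization of the involution $S$ on the invariant subspace together with the sign-tracking in the definition of $\mathcal{L}^\perp_\pm$.
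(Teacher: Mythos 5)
Your proposal is correct and follows essentially the same route as the paper: both rest on the identity $U\psi=-S\psi$ on $\mathcal{L}^\perp$ from \eqref{ULperp} and the decomposition $\mathcal{L}^\perp=\mathcal{L}^\perp_+\oplus\mathcal{L}^\perp_-$ into the $\mp1$ eigenspaces of $S$, with the crossed sign convention handled correctly. Your derivation of the decomposition (first checking $S$-invariance of $\mathcal{L}^\perp$, then diagonalizing the restricted involution) is just a repackaging of the paper's direct verification that $Q_\mp\psi\in\mathcal{L}^\perp_\pm$ via \eqref{Lperp}.
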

\begin{proof}
We first prove that $\mathcal{L}^\perp$ can be decomposed into
$\mathcal{L}^\perp
	= \mathcal{L}^\perp_+ \oplus \mathcal{L}^\perp_-$.
To this end, let us write $\psi \in \mathcal{L}^\perp$ as
\[ \psi = \psi_+ + \psi_-, \quad \psi_\pm := Q_\mp \psi. \]
Because, by \eqref{Lperp}, we have $d_A \psi = 0$ and $d_A(S\psi) = 0$,
we know that
\begin{align*} 
& d_A \psi_\pm = d_A \left( \frac{1\mp S}{2} \psi \right) = 0, \\
& d_A (S\psi_\pm) =  d_A \left( \frac{ S\mp 1}{2} \psi \right) = 0.
\end{align*}
Hence, by \eqref{Lperp} again,
we have $\psi_\pm \in \mathcal{L}^\perp_\pm$.
Because, by definition,
$\mathcal{L}^\perp_+ \perp \mathcal{L}^\perp_-$,
we know that $\mathcal{L}^\perp
	= \mathcal{L}^\perp_+ \oplus \mathcal{L}^\perp_-$.
	
We next prove that $\mathcal{L}^\perp_\pm$ is an invariant subspace of $U$.
To this end, let $\psi_\pm \in \mathcal{L}^\perp_\pm$.
By \eqref{ULperp}, we know that
\begin{equation*} 
U \psi_\pm = - S \psi_\pm =  \pm \psi_\pm,
\end{equation*}
where we have used $\psi_\pm \in \mathcal{H}_\mp^S$ in the last equality.
The above equation implies that
$U\mathcal{L}^\perp_\pm \subset \mathcal{L}^\perp_\pm$ and (1).
(2) immediately follows from (1).
\end{proof}

\subsection{Eigenvalue of $U_{\overline{\mathcal{L}_1}}$}
In this section,
we prove:
\begin{proposition}
\label{prop*03}
{\rm
\begin{itemize}
\item[(1)] $\sigma_{\rm p}(U_{\overline{\mathcal{L}_1}}) 
		= \{ e^{i\xi} \mid  \cos \xi \in \sigma_{\rm p}(T),
	 	\xi \in (0,\pi) \cup (\pi, 2\pi) \}$.
\item[(2)] For all $\xi  \in (0,\pi) \cup (\pi, 2\pi)$,
	it holds that
	\[ {\rm dim} \ker (U - e^{i\xi}) = {\rm dim} \ker (T -\cos \xi). \]
\end{itemize}
}
\end{proposition}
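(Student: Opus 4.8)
The plan is to establish a correspondence between eigenvectors of $U$ in $\overline{\mathcal{L}_1}$ with eigenvalue $e^{i\xi}$ ($\xi \neq 0,\pi$) and eigenvectors of $T$ with eigenvalue $\cos\xi$, and to show this correspondence is a linear isomorphism so that multiplicities match. The analysis in Subsection 5.1 has already done most of the forward direction: from Proposition \ref{prop*00}(3) we know $\sigma_{\rm p}(U)\setminus\{\pm1\} = \sigma_{\rm p}(U_{\overline{\mathcal{L}_1}})$, and the computation following \eqref{327946} shows that any eigenvector $\psi$ of $U$ with eigenvalue $\lambda=e^{i\xi}$ has the explicit form $\psi = \tfrac{2}{1-\lambda^2}(d_A^* - \lambda d_B^*)f$ with $f \in \ker(T-\cos\xi)$, $f\neq 0$. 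This already gives the inclusion $\sigma_{\rm p}(U_{\overline{\mathcal{L}_1}}) \subseteq \{e^{i\xi} \mid \cos\xi \in \sigma_{\rm p}(T)\}$ and defines a map from eigenvectors of $U$ to eigenvectors of $T$, namely $\psi \mapsto f$.

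For the reverse inclusion in part (1) and for the dimension equality in part (2), the plan is to define explicitly, for each $\xi \in (0,\pi)\cup(\pi,2\pi)$, the map
\begin{equation*}
\Phi_\xi : \ker(T-\cos\xi) \longrightarrow \ker(U - e^{i\xi}), \qquad \Phi_\xi(f) = \frac{2}{1-\lambda^2}\left(d_A^* - \lambda d_B^*\right)f,
\end{equation*}
where $\lambda = e^{i\xi}$, and to verify directly that $\Phi_\xi$ is well-defined, i.e.\ that $U\,\Phi_\xi(f) = \lambda\,\Phi_\xi(f)$ for every $f \in \ker(T-\cos\xi)$. This verification uses Lemma \ref{lemmaUTSC}(3): applying $U$ to $d_A^* f$ gives $d_B^* f$, and applying $U$ to $d_B^* f$ gives $2d_B^* T f - d_A^* f = 2\cos\xi\, d_B^* f - d_A^* f$ since $Tf = \cos\xi\, f$. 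Substituting these and collecting terms, one checks that $U(d_A^* - \lambda d_B^*)f = \lambda(d_A^* - \lambda d_B^*)f$ reduces to the identity $\lambda^2 - 2\lambda\cos\xi + 1 = 0$, which holds precisely because $\lambda = e^{i\xi}$. Combined with the forward map $\psi \mapsto f$ already constructed, this shows $\Phi_\xi$ is surjective onto $\ker(U-e^{i\xi})$, giving both directions of (1).

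The remaining work is to show $\Phi_\xi$ is injective, which then yields $\dim\ker(U-e^{i\xi}) = \dim\ker(T-\cos\xi)$. The cleanest route is to exhibit a left inverse: I expect that $d_A \Phi_\xi(f) = f$ up to a nonzero scalar, using $d_A d_A^* = I_{\mathcal{K}}$, $d_A d_B^* = T$ (Lemma \ref{lemmaUTSC}(5)), and $Tf = \cos\xi\, f$. Concretely, $d_A\,\tfrac{2}{1-\lambda^2}(d_A^* - \lambda d_B^*)f = \tfrac{2}{1-\lambda^2}(I - \lambda T)f = \tfrac{2}{1-\lambda^2}(1 - \lambda\cos\xi)f$, and since $1 - \lambda\cos\xi = 1 - e^{i\xi}\cos\xi \neq 0$ for $\xi \neq 0,\pi$, recovering $f$ from $\Phi_\xi(f)$ shows $\Phi_\xi$ is injective. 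Thus $\Phi_\xi$ is a bijection and the dimension formula follows.

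The main obstacle I anticipate is purely bookkeeping rather than conceptual: one must be careful that the scalar prefactors $\tfrac{2}{1-\lambda^2}$ and $1 - \lambda\cos\xi$ are genuinely nonzero throughout the range $\xi \in (0,\pi)\cup(\pi,2\pi)$ (equivalently $\lambda \neq \pm 1$), so that the maps $\Phi_\xi$ and its left inverse are well-defined and mutually inverse. Since the entire construction lives inside $\mathcal{L}_1$ and $U$ is already known to preserve $\overline{\mathcal{L}_1}$ by Proposition \ref{propinv}, no additional invariance or closure argument is needed; the eigenvectors produced automatically lie in the correct reducing subspace.
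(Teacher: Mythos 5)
Your proposal is correct and follows essentially the same route as the paper: both establish the forward direction via Proposition \ref{prop*00}, construct the eigenvector $(d_A^*-\lambda d_B^*)f$ and verify $U\psi=\lambda\psi$ using Lemma \ref{lemmaUTSC}, and prove bijectivity of the resulting map by exhibiting a left inverse. The only cosmetic difference is your choice of left inverse --- the scaled restriction of $d_A$, using $d_A\Phi_\xi(f)=\tfrac{2}{1-\lambda^2}(1-\lambda\cos\xi)f$ on the eigenspace --- where the paper uses $M_\lambda=\tfrac{\lambda}{1-\lambda^2}d_A(S+\bar\lambda)$, which satisfies $M_\lambda K_\lambda=1$ identically on $\mathcal{K}$; both are valid.
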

Summarizing  Propositions \ref{prop*00}, \ref{prop*01}, 
\ref{prop*02} and \ref{prop*03},
we obtain the following:
\begin{theorem}
\label{Mainthm01}
{\rm
The set of eigenvalues of $U$ is given by
\[ \sigma_{\rm p}(U) 
	= \{ e^{i\xi} \mid \cos \xi \in \sigma_{\rm p}(T), \ \xi \in [0,2\pi) \}
		\cup \{+1\}^{M_+} \cup \{-1\}^{M_-} \]
and the multiplicity is given by
\[ {\rm dim} \ker(U - e^{i\xi}) 
	= \begin{cases}
		{\rm dim} \ker (T - \cos \xi), & \xi \in (0, \pi) \cup (\pi, 2\pi), \\
		M_+ + m_+, & \xi = 0, \\
		M_- + m_-, & \xi = \pi, 
		\end{cases}
\]
where 
	\begin{equation} \label{Mpm}
        M_{\pm} = {\rm dim}\mathcal{L}^\perp_\pm \mathrm{\;\;and\;\;} m_\pm = {\rm dim}\ker(T\mp 1).
        \end{equation}
}
\end{theorem}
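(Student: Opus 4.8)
The plan is to assemble the statement from the orthogonal reduction of $U$ established in Proposition~\ref{propinv} together with Propositions~\ref{prop*00}, \ref{prop*01}, \ref{prop*02}, and~\ref{prop*03}; no new analytic input is needed, only careful bookkeeping. The starting point is the decomposition $\mathcal{H}=\overline{\mathcal{L}_1}\oplus\mathcal{L}_0\oplus\mathcal{L}^\perp$, under which $U=U_{\overline{\mathcal{L}_1}}\oplus U_{\mathcal{L}_0}\oplus U_{\mathcal{L}^\perp}$. Because $U$ is reduced by each summand, every eigenspace splits as an orthogonal direct sum
\[ \ker(U-\lambda)=\ker(U_{\overline{\mathcal{L}_1}}-\lambda)\oplus\ker(U_{\mathcal{L}_0}-\lambda)\oplus\ker(U_{\mathcal{L}^\perp}-\lambda), \]
so that both the eigenvalue set \eqref{decomp03} and each multiplicity are obtained by adding the three contributions. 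Thus the whole theorem reduces to reading off what each restriction supplies.

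Next I would collect the three contributions. By Proposition~\ref{prop*03}(1) the part $U_{\overline{\mathcal{L}_1}}$ supplies exactly the eigenvalues $e^{i\xi}$ with $\cos\xi\in\sigma_{\rm p}(T)$ and $\xi\in(0,\pi)\cup(\pi,2\pi)$, and by Proposition~\ref{prop*03}(2) its multiplicity there equals $\dim\ker(T-\cos\xi)$. By Proposition~\ref{prop*01} the part $U_{\mathcal{L}_0}$ contributes only $\pm 1$, with multiplicities $m_\pm=\dim\ker(T\mp 1)$ (via the bijection of Lemma~\ref{lemmaisom}), and by Proposition~\ref{prop*02} the part $U_{\mathcal{L}^\perp}$ contributes only $\pm 1$, with multiplicities $M_\pm=\dim\mathcal{L}^\perp_\pm$. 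The crucial compatibility fact, supplied by Proposition~\ref{prop*00}, is that $U_{\overline{\mathcal{L}_1}}$ produces no eigenvalue at $\pm 1$, that is $\sigma_{\rm p}(U_{\overline{\mathcal{L}_1}})\cap\{\pm 1\}=\emptyset$; this guarantees that the $\xi=0,\pi$ layers are fed solely by $\mathcal{L}_0$ and $\mathcal{L}^\perp$.

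Finally I would reconcile the bookkeeping. For $\xi\notin\{0,\pi\}$ only $\overline{\mathcal{L}_1}$ is active, giving $\dim\ker(U-e^{i\xi})=\dim\ker(T-\cos\xi)$ directly. For $\lambda=+1$ (resp.\ $-1$) the eigenspace is the orthogonal sum of the $\mathcal{L}_0^+$ part of dimension $m_+$ (resp.\ the $\mathcal{L}_0^-$ part of dimension $m_-$) and the $\mathcal{L}^\perp_+$ part of dimension $M_+$ (resp.\ the $\mathcal{L}^\perp_-$ part of dimension $M_-$), so the multiplicity is $M_\pm+m_\pm$. For the eigenvalue set, I would note that $+1$ (resp.\ $-1$) lies in $\{e^{i\xi}\mid\cos\xi\in\sigma_{\rm p}(T),\ \xi\in[0,2\pi)\}$ precisely through the endpoint $\xi=0$ (resp.\ $\xi=\pi$), which occurs exactly when $m_+>0$ (resp.\ $m_->0$), while the remaining multiplicity $M_\pm$ is what the superscript sets $\{\pm 1\}^{M_\pm}$ record. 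Matching these accountings yields the displayed set equality.

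The only genuinely delicate point is the clean separation of the two sources of the $\pm 1$ eigenvalues: one must verify that no multiplicity is double-counted and none is lost. The $m_\pm$ coming from $\ker(T\mp 1)$ enters through $\mathcal{L}_0$ and is already encoded inside $\{e^{i\xi}\mid\cos\xi\in\sigma_{\rm p}(T)\}$ at the endpoints, whereas the $M_\pm$ coming from $\mathcal{L}^\perp=\ker(d_A)\cap\ker(d_B)$ is genuinely extra. Since $\overline{\mathcal{L}_1}$, $\mathcal{L}_0$, $\mathcal{L}^\perp$ are mutually orthogonal and their spectral data have already been computed, once this separation is made explicit the argument is transcription rather than a new proof, and the theorem follows.
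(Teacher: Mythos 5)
Your proposal is correct and follows essentially the same route as the paper, which proves Theorem~\ref{Mainthm01} precisely by summarizing Propositions~\ref{prop*00}, \ref{prop*01}, \ref{prop*02}, and~\ref{prop*03} over the reducing decomposition $\mathcal{H}=\overline{\mathcal{L}_1}\oplus\mathcal{L}_0\oplus\mathcal{L}^\perp$ from Proposition~\ref{propinv}. Your explicit accounting of the $\pm 1$ multiplicities, using $\sigma_{\rm p}(U_{\overline{\mathcal{L}_1}})\cap\{\pm 1\}=\emptyset$ from Proposition~\ref{prop*00} to rule out double counting, is exactly the bookkeeping the paper leaves implicit.
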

The following corollary is immediately obtained from Theorem \ref{Mainthm01}.
\begin{corollary}
{\rm
Let $M_\pm$ and $m_\pm$ be as above.
Then:
\begin{itemize}
\item[(1)] $\sigma_{\rm p}(U) \setminus \{\pm 1\} 
			=  \{ e^{i\xi} \mid 
				\cos \xi \in \sigma_{\rm p}(T) \setminus \{\pm1\}, \ \xi \in [0,2\pi) \}$;
\item[(2)] $\sigma_{\rm p}(U) \cap \{\pm 1\} 
			= \{+1 \}^{M_+ + m_+} \cup \{-1\}^{M_-+m_-}$.
\end{itemize} 
}
\end{corollary}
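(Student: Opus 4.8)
The plan is to read off both assertions directly from the explicit description of $\sigma_{\rm p}(U)$ and the accompanying multiplicity formula recorded in Theorem \ref{Mainthm01}; no new analytic input is required, and the entire task reduces to sorting the already-computed spectrum according to whether or not an eigenvalue coincides with $\pm 1$.

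First I would fix the elementary dictionary between the angle $\xi$ and $\cos\xi$ on the fundamental domain $\xi \in [0,2\pi)$: one has $e^{i\xi} = 1$ exactly when $\xi = 0$, i.e. when $\cos\xi = 1$, and $e^{i\xi} = -1$ exactly when $\xi = \pi$, i.e. when $\cos\xi = -1$; for every remaining $\xi$ both $e^{i\xi} \notin \{\pm 1\}$ and $\cos\xi \neq \pm 1$. This equivalence is the only structural fact needed, and it shows that excising $\{\pm 1\}$ from the set $\{ e^{i\xi} \mid \cos\xi \in \sigma_{\rm p}(T),\ \xi \in [0,2\pi) \}$ amounts precisely to restricting the admissible $\cos\xi$ to $\sigma_{\rm p}(T) \setminus \{\pm 1\}$.

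For part (1) I would start from $\sigma_{\rm p}(U) = \{ e^{i\xi} \mid \cos\xi \in \sigma_{\rm p}(T),\ \xi \in [0,2\pi) \} \cup \{+1\}^{M_+} \cup \{-1\}^{M_-}$ and intersect with the complement of $\{\pm 1\}$. The two extra pieces $\{\pm 1\}^{M_\pm}$ are supported on $\{\pm 1\}$ and hence disappear, while by the dictionary above the first set becomes $\{ e^{i\xi} \mid \cos\xi \in \sigma_{\rm p}(T) \setminus \{\pm 1\},\ \xi \in [0,2\pi) \}$; this is exactly assertion (1).

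For part (2) I would instead invoke the multiplicity formula of Theorem \ref{Mainthm01} at the two exceptional angles: at $\xi = 0$ the eigenvalue is $+1$ with ${\rm dim}\ker(U - 1) = M_+ + m_+$, and at $\xi = \pi$ it is $-1$ with ${\rm dim}\ker(U + 1) = M_- + m_-$, so that $\sigma_{\rm p}(U) \cap \{\pm 1\} = \{+1\}^{M_+ + m_+} \cup \{-1\}^{M_- + m_-}$ (with the usual convention that a zero exponent yields the empty set). There is no genuine obstacle here; the only point requiring care is to keep the two sources of the eigenvalues $\pm 1$ cleanly separated and then add them---namely the geometric contribution $M_\pm = {\rm dim}\,\mathcal{L}^\perp_\pm$ arising from $\mathcal{L}^\perp$ and the contribution $m_\pm = {\rm dim}\ker(T \mp 1)$ coming through $\mathcal{L}_0$ from the $\pm 1$ eigenvectors of the discriminant $T$---exactly as the multiplicity formula prescribes.
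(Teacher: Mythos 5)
Your proposal is correct and matches the paper exactly: the paper gives no separate argument, stating only that the corollary ``is immediately obtained from Theorem \ref{Mainthm01},'' and your sorting of the eigenvalue set by whether $e^{i\xi}=\pm 1$ (equivalently $\cos\xi=\pm 1$ for $\xi\in[0,2\pi)$) together with reading the multiplicities $M_\pm+m_\pm$ off the formula at $\xi=0,\pi$ is precisely that intended extraction.
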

\begin{proof}[Proof of Proposition \ref{prop*03}]
We first prove (i). 
Because we have already proved in  Proposition \ref{prop*00} that
$\sigma_{\rm p}(U_{\overline{\mathcal{L}_1}}) 
		\subset \{ e^{i\xi} \mid  \cos \xi \in \sigma_{\rm p}(T),
	 	\xi \in (0,\pi) \cup (\pi, 2\pi) \}$,
we need only to prove the converse statement.
To this end,
it suffices to show $e^{i \xi} \in \sigma_{\rm p}(U_{\overline{\mathcal{L}_1}})$
for $\cos \xi \in \sigma_{\rm p}(T) \setminus \{\pm 1\}$.
Let $f \in \ker(T-\cos \xi) \setminus\{0\}$ be an eigenvector of $T$
with eigenvalue $\lambda = e^{i\xi}$
and set
\begin{equation}
\label{327*100*} 
\psi = \left(d_A^* -\lambda d_B^* \right)f,
\end{equation}
which is clearly in $\mathcal{L}_1$.
We observe that $\psi \not=0$,
because we know that $\psi = (1-\lambda S) d_A^* f$
and, from \eqref{327946}, that
$(1-\lambda S)^{-1} = -\bar\lambda (S+\bar\lambda)^{-1}$ is bounded.
Because $Tf = \cos \xi f$, we have
\begin{align*}
U \psi = d_B^* (1- 2\lambda T )f + \lambda d_A^* f
	= \lambda \psi.
\end{align*}
Hence we have the desired result
and (1) is proved.

To prove (ii), we consider the multiplicity of  
$e^{i\xi}$ ($\cos \xi \in \sigma_{\rm p}(T) \setminus \{\pm 1\}$).
Let $\lambda = e^{i\xi} \in \sigma_{\rm p}(U_{\overline{\mathcal{L}_1}})$
and $\psi$ be its eigenvector.
Then, from the argument in the proof of Proposition \ref{prop*00},
we know that $\psi$ is of the form \eqref{327*100*}
up to a constant factor.
As is shown in the proof of (i),
we know that, if $\psi$ is of the form \eqref{327*100*},
then $\psi \in \ker(U-\lambda)$.
Therefore we have
\[ \ker(U-\lambda) = \{ \psi 
	= (d_A^* -\lambda d_B^*)f \mid f \in \ker(T-\cos \xi) \}. \]
Let us now define a map $K_\lambda: \ker(T-\cos\xi) \to \ker(U -e^{i\xi})$
by
\[ K_\lambda = d_A^* - \lambda d_B^* = (1-\lambda S)d_A^*.  \]
Then, $K_\lambda$ is a surjection,
because $\ker(U-\lambda) = K_\lambda \ker(T-\cos \xi)$.
We also observe that an operator
\[ M_\lambda = \frac{\lambda}{1-\lambda^2} d_A (S + \bar{\lambda}) \]
satisfies
\[ M_\lambda K_\lambda = 1. \]
Thus, we know that $K_\lambda$ is a bijection
and obtain the desired result.
\end{proof}
\begin{remark}
{\rm
From the above proof,
we know that,
for $\lambda = e^{i\xi} \not=\pm 1$,
\begin{itemize}
\item[(1)] $\ker(U-\lambda) = K_\lambda \ker(T-\cos \xi)$,
\item[(2)] $\ker(T-\cos \xi) = M_\lambda \ker(U-\lambda)$. 
\end{itemize}
}
\end{remark}

\section{Spectra of $U$}
In this section,
we characterize the spectrum $\sigma(U)$.
\begin{proposition}
\label{prop100}
{\rm
It holds that
\[ \sigma(U_{\overline{\mathcal{L}_1}}) \setminus K
	= \{ e^{i\xi} \mid \cos \xi \in \sigma(T), \ \xi \in [0,2\pi) \} 
		\setminus K,  \]
where $K := \{+1, -1\} \cap \sigma_{\rm p}(T)$.
In particular, if $K = \emptyset$, then
\[ \sigma(U_{\overline{\mathcal{L}_1}}) 
	= \{ e^{i\xi} \mid \cos \xi \in \sigma(T), \ \xi \in [0,2\pi) \}. \]
}
\end{proposition}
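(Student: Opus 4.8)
The plan is to prove the asserted equality by establishing the two inclusions pointwise in $\lambda = e^{i\xi}$, $\xi\in[0,2\pi)$, using that $\{e^{i\xi}\mid\cos\xi\in\sigma(T)\}=\varphi^{-1}(\sigma(T))$ and that $U_{\overline{\mathcal L_1}}$ is unitary (as noted after Proposition \ref{propinv}), so that its spectrum is detected by approximate eigenvectors while its resolvent set is detected by a uniform lower bound. Since Proposition \ref{prop*03} already locates the eigenvalues of $U_{\overline{\mathcal L_1}}$, the genuinely new content concerns the continuous spectrum; the only delicate points are $\lambda=\pm1$, which is precisely the reason the statement brackets out $K$.

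For the inclusion $\varphi^{-1}(\sigma(T))\setminus K\subseteq\sigma(U_{\overline{\mathcal L_1}})$ I would construct singular sequences. Fix $\cos\xi\in\sigma(T)$ with $\cos\xi\neq\pm1$ and choose $f_n\in\ker(T^2-1)^\perp$ with $\|f_n\|=1$ and $(T-\cos\xi)f_n\to0$ (possible since $\cos\xi\neq\pm1$ places this spectral point in the restriction to $\ker(T^2-1)^\perp$). Setting $\psi_n=K_\lambda f_n=(d_A^*-\lambda d_B^*)f_n\in\mathcal L_1$ and applying Lemma \ref{lemmaUTSC}, a short computation gives $(U-\lambda)\psi_n=-2\lambda\,d_B^*(T-\cos\xi)f_n$, so $\|(U-\lambda)\psi_n\|=2\|(T-\cos\xi)f_n\|\to0$; since $\psi_n=(1-\lambda S)d_A^* f_n$ and $(1-\lambda S)^{-1}$ is bounded for $\lambda\neq\pm1$ by \eqref{327946}, the norms $\|\psi_n\|$ stay bounded below, whence $\lambda\in\sigma(U_{\overline{\mathcal L_1}})$. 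At $\lambda=\pm1$ with $\pm1\in\sigma(T)\setminus\sigma_{\rm p}(T)$ (so $\pm1\notin K$) the factor $K_\lambda$ degenerates, and I would instead take $\psi_n=d_A^* f_n\in\mathcal L_1$ with $(T\mp1)f_n\to0$: then $(U\mp1)\psi_n=(S\mp1)d_A^* f_n$ and $\|(S\mp1)d_A^* f_n\|^2=2(1\mp\langle f_n,Tf_n\rangle)\to0$, giving $\pm1\in\sigma(U_{\overline{\mathcal L_1}})$.

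The reverse inclusion $\sigma(U_{\overline{\mathcal L_1}})\subseteq\varphi^{-1}(\sigma(T))$ is the main obstacle. I would show that $\cos\xi\notin\sigma(T)$ forces $e^{i\xi}\in\rho(U_{\overline{\mathcal L_1}})$ by establishing a uniform lower bound $\|(U-\lambda)\psi\|\geq c\|\psi\|$ for all $\psi\in\mathcal L_1$; this suffices because the normal operator $U_{\overline{\mathcal L_1}}-\lambda$ is invertible as soon as it is bounded below, and the bound extends to $\overline{\mathcal L_1}$ by density. In the coordinates $\psi=d_A^* f+d_B^* g$ ($f,g\in\ker(T^2-1)^\perp$) one has, by Lemma \ref{lemmaUTSC}, that $U$ is the block matrix $\left(\begin{smallmatrix}0 & -I\\ I & 2T\end{smallmatrix}\right)$, so $U-\lambda$ inverts formally by a Schur complement whose only nontrivial factor is $2(T-\cos\xi)$. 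The difficulty is that the coordinate map $(f,g)\mapsto d_A^* f+d_B^* g$ carries the Gram operator $\left(\begin{smallmatrix}I & T\\ T & I\end{smallmatrix}\right)$, which fails to be boundedly invertible exactly when $\pm1\in\sigma(T)$, so the formal block inverse cannot be transported verbatim. I would circumvent this by invoking the spectral theorem for the self-adjoint $T$ to diagonalize it, reducing the lower bound to the scalar inequality for $\left(\begin{smallmatrix}0 & -1\\ 1 & 2\mu\end{smallmatrix}\right)$ equipped with the fiber metric $\left(\begin{smallmatrix}1 & \mu\\ \mu & 1\end{smallmatrix}\right)$, uniformly over $\mu$ in the support of the spectral measure of $T$; the gap $\mathrm{dist}(\cos\xi,\sigma(T))>0$ produces the constant $c$, and since the fiber metric degenerates only at $\mu=\pm1\notin\sigma(T)$, no degeneracy is encountered on the relevant range.

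Combining the two inclusions yields $\sigma(U_{\overline{\mathcal L_1}})\setminus K=\varphi^{-1}(\sigma(T))\setminus K$. Finally, since $K\subseteq\varphi^{-1}(\sigma(T))$, the case $K=\emptyset$ collapses to the stated equality $\sigma(U_{\overline{\mathcal L_1}})=\varphi^{-1}(\sigma(T))$ with no exceptional set.
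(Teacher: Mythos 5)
Your forward inclusion is sound. The identity $(U-\lambda)K_\lambda f_n=-2\lambda d_B^*(T-\cos\xi)f_n$ is correct by Lemma \ref{lemmaUTSC}, and since $K_\lambda f=(1-\lambda S)d_A^*f$, this is essentially the same Weyl-sequence construction the paper uses for $I_1$ in Lemma \ref{lemma*11*}(ii). Your treatment of $\pm1\in\sigma_{\rm c}(T)$ via $\psi_n=d_A^*f_n$ and $\|(S\mp1)d_A^*f_n\|^2=2\bigl(1\mp\langle f_n,Tf_n\rangle\bigr)\to0$ is a legitimate and in fact more direct alternative to the paper's $I_2$ argument, which instead approximates $\pm1$ by points $\cos\xi_n\in\sigma(T)\cap(-1,1)$ and invokes closedness of $\sigma(U_{\overline{\mathcal{L}_1}})$. (Minor point: in the $\pm1$ case too you must take $f_n\in\ker(T^2-1)^\perp$ so that $\psi_n\in\mathcal{L}_1$; this is possible because $\pm1\in\sigma_{\rm c}(T)$ forces $\pm1\in\sigma(T_1)$ for the restriction $T_1=T|_{\ker(T^2-1)^\perp}$, which has no eigenvalue there.)

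The genuine gap is the last step of your reverse inclusion. You justify uniformity of the fiberwise lower bound by asserting that the Gram metric $G(\mu)=\left(\begin{smallmatrix}1&\mu\\ \mu&1\end{smallmatrix}\right)$ ``degenerates only at $\mu=\pm1\notin\sigma(T)$.'' Nothing in Proposition \ref{prop100} excludes $\pm1$ from $\sigma(T)$: only $K=\{\pm1\}\cap\sigma_{\rm p}(T)$ is removed, and the most relevant instances have $\pm1\in\sigma_{\rm c}(T)$ --- e.g.\ $T=P_{\mathbb{Z}^d}$ with $\sigma(T)=[-1,1]$ and $K=\emptyset$, precisely the case your own ``in particular'' clause addresses. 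There the support of the spectral measure of $T_1$ reaches $\pm1$, the fiber metric does degenerate on the relevant range, and your stated justification fails. The repair lies inside your own setup: (a) $\{\pm1\}$ carries zero spectral mass for $T_1$, since $\ker(T_1\mp1)=\{0\}$ by construction, so almost every fiber has $|\mu|<1$; and (b) the bound is uniform as $\mu\to\pm1$ anyway, because $A(\mu)=\left(\begin{smallmatrix}0&-1\\ 1&2\mu\end{smallmatrix}\right)$ satisfies $A(\mu)^*G(\mu)A(\mu)=G(\mu)$, i.e.\ is unitary (hence normal) for the fiber inner product, with eigenvalues $e^{\pm i\arccos\mu}$; therefore for $|\mu|<1$ one has $\|(A(\mu)-\lambda)v\|_{G(\mu)}\ge \mathrm{dist}\bigl(\lambda,\{e^{\pm i\arccos\mu}\}\bigr)\|v\|_{G(\mu)}\ge|\cos\xi-\mu|\,\|v\|_{G(\mu)}\ge\delta\|v\|_{G(\mu)}$ with $\delta=\mathrm{dist}(\cos\xi,\sigma(T))$, independent of $\mu$. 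Degeneration of $G$ is harmless because the estimate lives in the $G$-seminorm, which is exactly what the isometry $(f,g)\mapsto d_A^*f+d_B^*g$ transports to $\mathcal{L}_1$; with this fix, your contrapositive argument (bounded below plus normality of $U_{\overline{\mathcal{L}_1}}-\lambda$, then density of $\mathcal{L}_1$) closes correctly.

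Modulo this repair, your route for the hard inclusion is genuinely different from the paper's. The paper proves Lemma \ref{lemma*11*}(i) forwards: it takes a Weyl sequence $\psi_n$ for $U_{\overline{\mathcal{L}_1}}$ at $e^{i\xi}$ and pushes it down to $T$ via $f_n=d_A\psi_n$, the crux being that $f_n\not\to0$, ruled out through $\lim_n\langle S\psi_n,\psi_n\rangle=-e^{-i\xi}$ (impossible for $\xi\neq0,\pi$ by self-adjointness of $S$) together with a separate contradiction at $\xi=0,\pi$. Your resolvent-estimate route avoids that non-vanishing issue entirely, and yields the quantitative bound $\|(U_{\overline{\mathcal{L}_1}}-\lambda)^{-1}\|\le\delta^{-1}$, at the price of the direct-integral machinery and of the uniformity analysis you must carry out correctly near $\mu=\pm1$.
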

Before proving this proposition,
we first state the following:
\begin{theorem}
\label{Mainthm02}
{\rm
$\sigma(U) 
	= \{ e^{i\xi} \mid \cos \xi \in \sigma(T), \ \xi \in [0,2\pi) \} 
		\cup \{1\}^{M_+} \cup \{-1\}^{M_-}$.
}
\end{theorem}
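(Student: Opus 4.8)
The plan is to read $\sigma(U)$ directly off the orthogonal decomposition $U = U_{\overline{\mathcal{L}_1}} \oplus U_{\mathcal{L}_0} \oplus U_{\mathcal{L}^\perp}$ furnished by Proposition \ref{propinv}, which yields the spectral decomposition \eqref{decomp02},
\[
 \sigma(U) = \sigma(U_{\overline{\mathcal{L}_1}}) \cup \sigma(U_{\mathcal{L}_0}) \cup \sigma(U_{\mathcal{L}^\perp}).
\]
Two of the three pieces are already pinned down: Proposition \ref{prop*01} gives $\sigma(U_{\mathcal{L}_0}) = \{1\}^{m_+} \cup \{-1\}^{m_-}$ with $m_\pm = \dim\ker(T\mp 1)$, and Proposition \ref{prop*02} gives $\sigma(U_{\mathcal{L}^\perp}) = \{1\}^{M_+} \cup \{-1\}^{M_-}$. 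Writing $\Sigma := \{e^{i\xi} \mid \cos\xi \in \sigma(T),\ \xi \in [0,2\pi)\}$ for the target Joukowsky preimage, the only remaining ingredient is $\sigma(U_{\overline{\mathcal{L}_1}})$, which Proposition \ref{prop100} identifies with $\Sigma$ away from the finite exceptional set $K = \{+1,-1\} \cap \sigma_{\rm p}(T)$.

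First I would record two elementary set identities that let $K$ be absorbed into the other pieces. Since $\cos 0 = 1$ and $\cos\pi = -1$, any $\pm 1$ lying in $\sigma(T)$ also lies in $\Sigma$; as $K \subset \sigma_{\rm p}(T) \subset \sigma(T)$, this gives $K \subset \Sigma$. Second, $\pm 1 \in K$ holds exactly when $\pm 1 \in \sigma_{\rm p}(T)$, i.e.\ when $m_\pm > 0$, which is precisely the condition for $\pm 1 \in \sigma(U_{\mathcal{L}_0})$; hence, as subsets of $\mathbb{C}$, one has $\sigma(U_{\mathcal{L}_0}) = K$.

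Then I would assemble the union. Splitting $\sigma(U_{\overline{\mathcal{L}_1}})$ along $K$ and invoking Proposition \ref{prop100} off $K$,
\begin{align*}
\sigma(U) &= \bigl(\sigma(U_{\overline{\mathcal{L}_1}})\setminus K\bigr) \cup \bigl(\sigma(U_{\overline{\mathcal{L}_1}})\cap K\bigr) \cup \sigma(U_{\mathcal{L}_0}) \cup \sigma(U_{\mathcal{L}^\perp}) \\
&= (\Sigma \setminus K) \cup K \cup \bigl(\{1\}^{M_+}\cup\{-1\}^{M_-}\bigr),
\end{align*}
where the second line uses $\sigma(U_{\overline{\mathcal{L}_1}})\setminus K = \Sigma\setminus K$ from Proposition \ref{prop100}, together with the fact that the two middle terms collapse to $K$ because $\sigma(U_{\overline{\mathcal{L}_1}})\cap K \subset K$ and $\sigma(U_{\mathcal{L}_0}) = K$. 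Since $K \subset \Sigma$, we have $(\Sigma\setminus K)\cup K = \Sigma$, so $\sigma(U) = \Sigma \cup \{1\}^{M_+}\cup\{-1\}^{M_-}$, which is the assertion.

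The one point requiring genuine care---the main obstacle---is the bookkeeping at the exceptional values $\pm 1$. Away from $K$ the identification $\sigma(U_{\overline{\mathcal{L}_1}}) = \Sigma$ is clean, but Proposition \ref{prop100} deliberately excises $K$, since at an eigenvalue $\pm 1$ of $T$ the resolvent-type construction behind that proposition degenerates and one cannot read off whether $\pm 1$ belongs to $\sigma(U_{\overline{\mathcal{L}_1}})$. The whole content of the argument above is that this ambiguity is harmless: every point of $K$ already lies in $\Sigma$ and is simultaneously produced as a genuine eigenvalue of $U$ on $\mathcal{L}_0$, so a possible gain or loss of $\pm 1$ in $\sigma(U_{\overline{\mathcal{L}_1}})$ never affects the final union. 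I would finish by checking that the superscripts $\{\pm 1\}^{M_\pm}$ are read under the stated convention (namely $\{\pm 1\}^{M_\pm} = \emptyset$ when $M_\pm = 0$), so that the displayed equality of sets is exactly the claimed one.
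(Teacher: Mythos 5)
Your proposal is correct and follows essentially the same route as the paper: both read $\sigma(U)$ off the decomposition \eqref{decomp02}, substitute Propositions \ref{prop*01} and \ref{prop*02} for the $\mathcal{L}_0$ and $\mathcal{L}^\perp$ pieces, and use Proposition \ref{prop100} together with the identification $\sigma(U_{\mathcal{L}_0}) = K$ to absorb the excised set $K$ back into the Joukowsky preimage $\Sigma$. Your version merely makes explicit the set-theoretic bookkeeping ($K \subset \Sigma$, hence $(\Sigma\setminus K)\cup K = \Sigma$) that the paper compresses into the single observation $\sigma(U_{\overline{\mathcal{L}_1}}) \cup \sigma_{\rm p}(U_{\mathcal{L}_0}) = \Sigma$.
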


\begin{proof}
{\rm
Combining \eqref{decomp02}
with Propositions \ref{prop*01} and \ref{prop*02},
we have
\begin{align*} 
\sigma(U) 
	& = \sigma(U_{\overline{\mathcal{L}_1}}) 
			\cup \sigma(U_{\mathcal{L}_0}) 
			\cup \sigma(U_{\mathcal{L}^\perp}) \\
	& = \sigma(U_{\overline{\mathcal{L}_1}}) 
			\cup \sigma_{\rm p}(U_{\mathcal{L}_0}) 
			\cup \sigma_{\rm p}(U_{\mathcal{L}^\perp}).
\end{align*}
Noting that $\sigma_{\rm p}(U_{\mathcal{L}_0}) = K$,
we observe from Proposition \ref{prop100} that
\[ \sigma(U_{\overline{\mathcal{L}_1}}) \cup \sigma_{\rm p}(U_{\mathcal{L}_0})
	= \{ e^{i\xi} \mid \cos \xi \in \sigma(T), \ \xi \in [0,2\pi) \}. \]
Since $\sigma_{\rm p}(U_{\mathcal{L}^\perp}) = \{+1\}^{M_+} \cup \{-1\}^{M_-}$,
the theorem is proved.
%
}
\end{proof}

Proposition \ref{prop100} is immediately proved by the following lemma:
\begin{lemma}
\label{lemma*11*} 
{\rm
The following hold:
\begin{itemize}
\item[(i)]
$\sigma(U_{\overline{\mathcal{L}_1}}) 
	\subset \{ e^{i\xi} \mid \cos \xi \in \sigma(T), \ \xi \in [0,2\pi) \}
$;
\item[(ii)]
$\{ e^{i\xi} \mid \cos \xi \in \sigma(T), \ \xi \in [0, 2\pi) \} \setminus K 
	\subset \sigma(U_{\overline{\mathcal{L}_1}})$.
\end{itemize}
}
\end{lemma}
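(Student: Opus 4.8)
The plan is to prove both inclusions by Weyl's criterion. Since $\overline{\mathcal{L}_1}$ reduces $U$ (Proposition~\ref{propinv}), $U_{\overline{\mathcal{L}_1}}$ is itself unitary, hence normal, and $T$ is self-adjoint; so in both operators the spectrum coincides with the approximate point spectrum and it suffices to transfer approximate eigenvectors. The bridge is the identity coming from Lemma~\ref{lemmaUTSC}: for $\psi = d_A^* f + d_B^* g$ on $\mathcal{L}_1$ (with $f,g \in \ker(T^2-1)^\perp$) one has $U\psi = d_B^*(f+2Tg) - d_A^* g$ and $U^{-1}\psi = d_A^*(2Tf+g) - d_B^* f$, whence the self-adjoint operator $A := \tfrac12(U+U^{-1})\!\mid_{\overline{\mathcal{L}_1}} = {\rm Re}\,U_{\overline{\mathcal{L}_1}}$ acts by
\[ A(d_A^* f + d_B^* g) = d_A^*(Tf) + d_B^*(Tg). \]
By the spectral mapping theorem for the normal operator $U_{\overline{\mathcal{L}_1}}$ applied to $z \mapsto {\rm Re}\,z$, we have $\sigma(A) = \{\cos\xi \mid e^{i\xi} \in \sigma(U_{\overline{\mathcal{L}_1}})\}$, which turns every question about $\sigma(U_{\overline{\mathcal{L}_1}})$ into a question about the self-adjoint $A$.

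For (i) I would show that $A$ is unitarily equivalent to $T$ restricted to $\mathcal{K}_0 := \ker(T^2-1)^\perp$. Introduce $\Phi : \mathcal{K}_0 \oplus \mathcal{K}_0 \to \overline{\mathcal{L}_1}$, $\Phi(f,g) = d_A^* f + d_B^* g$, which is bounded, injective (uniqueness of the decomposition $\mathcal{L} = \mathcal{L}_1 \oplus \mathcal{L}_0$) and has dense range. Using $d_A d_A^* = d_B d_B^* = I$ and $d_A d_B^* = d_B d_A^* = T$ (Lemma~\ref{lemmaUTSC}), its Gram operator is $\Phi^*\Phi = \left(\begin{smallmatrix} I & T \\ T & I \end{smallmatrix}\right)$, which is injective on $\mathcal{K}_0 \oplus \mathcal{K}_0$ and commutes with $T \oplus T$; the displayed identity is precisely the intertwining $A\Phi = \Phi(T \oplus T)$. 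Writing the polar decomposition $\Phi = V(\Phi^*\Phi)^{1/2}$, injectivity of $\Phi^*\Phi$ forces $V$ to be a genuine unitary from $\mathcal{K}_0 \oplus \mathcal{K}_0$ onto $\overline{\mathcal{L}_1}$, and since $T \oplus T$ commutes with $(\Phi^*\Phi)^{1/2}$ the intertwining descends to $A = V(T \oplus T)V^*$. Hence $\sigma(A) = \sigma(T\!\mid_{\mathcal{K}_0}) \subset \sigma(T)$, and together with the spectral mapping for ${\rm Re}$ this gives $\cos\xi \in \sigma(T)$ for every $e^{i\xi} \in \sigma(U_{\overline{\mathcal{L}_1}})$, which is (i).

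For (ii) I would build explicit Weyl sequences. Fix $\cos\xi \in \sigma(T)$ with $e^{i\xi}\notin K$ and choose unit vectors $f_n$ with $\|(T-\cos\xi)f_n\| \to 0$; because $e^{i\xi}\notin K$ we may take $f_n \in \mathcal{K}_0$ (if $\cos\xi = \pm1$ then $\pm1 \notin \sigma_{\rm p}(T)$, so a Weyl sequence at $\pm 1$ can be chosen orthogonal to $\ker(T\mp1)=\{0\}$, and it is automatically orthogonal to $\ker(T\pm1)$). For $\lambda = e^{i\xi}\neq\pm1$ set $\psi_n = K_\lambda f_n = (1-\lambda S)d_A^* f_n$; the computation in the proof of Proposition~\ref{prop*03} yields $(U-\lambda)\psi_n = -2\lambda\, d_B^*(T-\cos\xi)f_n \to 0$, while $\|\psi_n\|$ stays bounded below since $(1-\lambda S)^{-1}$ is bounded for $\lambda\neq\pm1$. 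At the boundary $\lambda = \pm1$ the vector $K_\lambda f_n$ degenerates (its norm tends to $0$), so I would instead take the complementary combination $\psi_n = (d_A^* \pm d_B^*)f_n$, for which $\|\psi_n\|^2 = 2 \pm 2\,{\rm Re}\langle f_n, Tf_n\rangle \to 4$ while a short computation gives $\|(U-\lambda)\psi_n\|^2 = 4\langle f_n,(1-T^2)f_n\rangle \le 8\|f_n\|\,\|(T-\cos\xi)f_n\| \to 0$. In each case $\psi_n/\|\psi_n\|$ witnesses $e^{i\xi}\in\sigma(U_{\overline{\mathcal{L}_1}})$, establishing (ii).

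The main obstacle is the behaviour at $\pm1$: the correspondence $\Phi$ is bounded and bijective but fails to be bounded below exactly when $\pm1$ lies in the essential spectrum of $T$, so one cannot simply transport spectra through a similarity. This is what forces the two devices above — in (i), passing to the self-adjoint $A = {\rm Re}\,U$ and exploiting that the polar partial isometry of the injective $\Phi$ is nonetheless unitary; in (ii), switching from $K_\lambda f_n$ to the complementary combination $(d_A^* \pm d_B^*)f_n$ at $\lambda = \pm1$. Finally, the exclusion of $K = \{\pm1\}\cap\sigma_{\rm p}(T)$ in (ii) is genuinely necessary, since an eigenvector of $T$ at $\pm1$ produces via $d_A^*$ an element of $\mathcal{L}_0$ rather than of $\mathcal{L}_1$, so these points are recovered through $\sigma_{\rm p}(U_{\mathcal{L}_0})$ in Proposition~\ref{prop100}, not through $U_{\overline{\mathcal{L}_1}}$.
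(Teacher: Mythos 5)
Your proof is correct, but it reaches part (i) by a genuinely different route than the paper. The paper argues directly with Weyl sequences for $U_{\overline{\mathcal{L}_1}}$: setting $f_n = d_A\psi_n$, it first rules out $f_n \to 0$ (that would force $\langle S\psi_n,\psi_n\rangle \to -e^{-i\xi}$, which self-adjointness of $S$ permits only for $\xi=0,\pi$, and those two cases are excluded by a separate contradiction), then verifies $Tf_n = (\cos\xi)f_n + o(1)$ along a subsequence with $\|f_n\|$ bounded below. You instead establish the stronger structural fact that ${\rm Re}\,U_{\overline{\mathcal{L}_1}}$ is unitarily equivalent to $(T\oplus T)\!\mid_{\mathcal{K}_0\oplus\mathcal{K}_0}$ with $\mathcal{K}_0:=\ker(T^2-1)^\perp$: I checked the intertwining $A\Phi=\Phi(T\oplus T)$ (indeed $U^{-1}d_A^* = 2d_A^*T - d_B^*$ and $U^{-1}d_B^* = d_A^*$ follow from Lemma \ref{lemmaUTSC}), the Gram identity $\Phi^*\Phi=\left(\begin{smallmatrix} I & T\\ T & I\end{smallmatrix}\right)$, whose kernel is $\ker(T^2-1)\oplus\ker(T^2-1)$ so that it is injective on $\mathcal{K}_0\oplus\mathcal{K}_0$ and commutes with $T\oplus T$, and the descent $AV=V(T\oplus T)$ through the dense range of $(\Phi^*\Phi)^{1/2}$; all of this is sound, and (i) then follows from $\sigma({\rm Re}\,U_{\overline{\mathcal{L}_1}})={\rm Re}\,\sigma(U_{\overline{\mathcal{L}_1}})$ together with $\sigma(T\!\mid_{\mathcal{K}_0})\subset\sigma(T)$. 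The cost is polar-decomposition machinery; the payoff is a full unitary equivalence for the real part rather than a bare spectral inclusion, which could also feed finer statements about the spectrum on $\overline{\mathcal{L}_1}$. For (ii), your interior case $\xi\neq 0,\pi$ coincides with the paper's (the same vectors $\psi_n=(1-e^{i\xi}S)d_A^*f_n$, with the same lower bound on $\|\psi_n\|$), but at the endpoints you diverge again: the paper recovers $\pm1\in\sigma_{\rm c}(T)$ by approximating with interior points of $\sigma(T)$ and invoking closedness of $\sigma(U_{\overline{\mathcal{L}_1}})$, whereas your explicit sequence $(d_A^*\pm d_B^*)f_n$, for which the computation $\|(U\mp 1)\psi_n\|^2=4\langle f_n,(1-T^2)f_n\rangle$ is correct, is more direct and even exhibits $\pm1$ in the approximate point spectrum, without needing that $\pm1$ is a non-isolated point of $\sigma(T)$.

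One small repair: a Weyl sequence for $T$ at $\cos\xi$ is not literally ``automatically orthogonal'' to $\ker(T^2-1)$; its component there merely tends to $0$ (at $+1$, the $\ker(T+1)$-component $h_n$ satisfies $4\|h_n\|^2\le\|(T-1)f_n\|^2$), so you should project onto $\mathcal{K}_0$ and renormalize. Note that this adjustment is needed for every $\xi$, not only at the endpoints, in order to guarantee $\psi_n\in\mathcal{L}_1$ — a point on which the paper's own proof of $I_1\subset\sigma(U_{\overline{\mathcal{L}_1}})$ is equally silent.
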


\begin{proof}[Proof of Lemma \ref{lemma*11*}] 
{\rm
(i)
Assume that $e^{i\xi} \in \sigma(U_{\overline{\mathcal{L}_1}})$.
Then, from the fact that $U_{\overline{\mathcal{L}_1}}$ is unitary,
we know that 
there exists a sequence $\{\psi_n\}$ of normalized vectors
such that 
\[ \lim_{n\to \infty}\|(U_{\overline{\mathcal{L}_1}}-e^{i\xi})\psi_n\|=0. \]
Let $f_n = d_A \psi_n$.
Assume that $\lim_{n \to \infty}f_n = 0$, which implies 
$\lim_{n \to \infty} d_A^*d_A \psi_n = \lim_{n \to \infty} d_A^* f_n = 0$
and hence
\begin{align*}
\langle U\psi_n, e^{i\xi} \psi_n \rangle
= e^{i\xi} \langle S(2d_A^*d_A - 1) \psi_n, \psi_n \rangle
= -e^{i\xi} \langle S \psi_n, \psi_n \rangle + o(1).
\end{align*}
By the definition of $\psi_n$, we have
\[ \langle U\psi_n, e^{i\xi} \psi_n \rangle 
	= \langle U\psi_n, U\psi_n \rangle + \langle U\psi_n, (e^{i\xi} - U) \psi_n \rangle
		= 1 + o(1). \]
Combining the above two equations,
we obtain
\begin{equation}
\label{32801} 
\lim_{n \to \infty} \langle S \psi_n, \psi_n \rangle= - e^{-i\xi}. 
\end{equation}
Because $S$ is self-adjoint,
\eqref{32801} is allowed only when $\xi = 0, \pi$. 

Let us first consider the case in which $\xi \in (0, \pi) \cup (\pi, 2\pi)$.
Then, $f_n$ does not converge to zero,
because, from the above argument,
\eqref{32801} contradicts $\lim_{n \to \infty}f_n = 0$. 
Hence, there exists a subsequence $\{f_{n_k}\}$ 
such that $\inf_k \|f_{n_k}\| = :c >0$ holds.
We write $f_{n_k} = d_A \psi_{n_k}$ simply as $f_k = d_A \psi_k$.
Then, we observe that
\begin{align}
T f_k & = d_B (d_A^* d_A) \psi_k
	= d_A S \left( \frac{C + 1}{2}\right) \psi_k
	= d_A \left( \frac{U + S}{2}\right)\psi_k \notag \\
	& = \frac{1}{2}  d_A ( e^{i\xi} + S )\psi_k + o(1). \label{*100*}
\end{align}
We also observe that
\begin{align}
S\psi_k & = e^{-i\xi} S(e^{i\xi}\psi_k) =  e^{-i\xi} S(U\psi_k) + o(1) \notag \\
	& =  e^{-i\xi} C\psi_k + o(1). \label{*200*}
\end{align}
Combining \eqref{*200*} with \eqref{*100*},
and using the fact that $d_A C = d_A$,
we obtain
\begin{align*}
T f_k 
	& = \frac{1}{2}  d_A ( e^{i\xi} + e^{-i\xi} C )\psi_k + o(1) \\
	& = (\cos \xi) f_k + o(1).
\end{align*}
Let $\tilde{f}_k := f_k/\|f_k\|$. 
Then, we know that $\|\tilde{f}_k\|=1$, and that
\begin{align*}
\|(T - \cos \xi )\tilde{f}_k\| \leq \frac{1}{c} \|(T - \cos \xi) f_k\| = o(1),
\end{align*}
where $c=\inf_k\|f_k\|>0$.
Thus, we obtain
\[ \sigma(U_{\overline{\mathcal{L}_1}}) \setminus\{\pm 1\} 
	\subset \{ e^{i\xi} \mid \cos \xi \in \sigma(T), 
			\ \xi \in (0,\pi) \cup (\pi,2\pi) \}. \]

We next consider the case in which $\xi = 0, \pi$,
{\it i.e.}, $\pm 1 \in  \sigma(U_{\overline{\mathcal{L}_1}})$.
In this case, assuming that $f_n = d_A \psi_n$ satisfies $\lim_{n \to \infty} f_n=0$,
we have
\begin{equation} 
\label{*300*}
\lim_{n \to \infty} \langle S\psi_n, \psi_n \rangle = \mp 1. 
\end{equation}
Using \eqref{*200*} with $\xi=0,\pi$,
we have
\begin{align*} 
S\psi_n & = \mp C \psi_n + o(1) \\
	& = \mp (2d_A^*d_A-1) \psi_n + o(1) \\
	& = \pm \psi_n + o(1).
\end{align*}
Substituting this equation into 
the left-hand side of \eqref{*300*},
we obtain
\[ \lim_{n \to \infty} \langle S\psi_n, \psi_n \rangle
	= \pm 1, \]
which contradicts \eqref{*300*}.
Hence we know that $f_n$ does not converge to zero.
Thus, from the same argument as above, 
we obtain $\pm 1 \in \sigma(T)$.
Therefore 
(i) is proved.

(ii)
We write
\[ \{ e^{i\xi} \mid \cos \xi \in \sigma(T), \ \xi \in [0, 2\pi) \} \setminus K 
	= I_1 \cup I_2, \]
where 
\begin{align*} 
& I_1 := \{ e^{i\xi} \mid \cos \xi \in \sigma(T), \ \xi \in (0, \pi) \cup (\pi, 2\pi) \}, \\
& I_2 := \sigma_{\rm c}(T) \cap \{+1,-1\}.
\end{align*}
Therefore, it suffices to show that $I_i \subset \sigma(U_{\overline{\mathcal{L}_1}})$
 ($i=1,2$).
 
Assume that $e^{i\xi} \in I_1$.
Then we know that $\cos \xi \in \sigma(T) \cap (-1,1)$ and that
there exists a sequence $\{ f_n \} \subset \mathcal{K}$ 
such that $\|f_n\|=1$ and $\lim_{n \to \infty}\|(T-\cos \xi)f_n\| = 0$.
We observe that $\psi_n:=(1-e^{i\xi}S) d_A^* f_n \in \mathcal{L}_1$
and that
\begin{align*}
\|\psi_n\|^2 
	& = 2\|d_A^* f_n\|^2 -2{\rm Re}(e^{i\xi}\langle d_A^* f_n, Sd_A^*f_n \rangle) \\
	& = 2 - 2 \cos \xi  \langle f_n, Tf_n \rangle \\
	& = 2(1-\cos^2 \xi) + o(1).
\end{align*}
Because $\lim\inf_{n\to\infty} \|\psi_n\|^2 = 2(1-\cos^2 \xi) > 0$,
$\psi_n$ does not converge to zero.
Hence, taking a subsequence if needed, 
we can assume that $\inf_n \|\psi_n\| =: c > 0$.
Then we have
\begin{align*}
U \psi_n & = U(1-e^{i\xi}S) d_A^* f_n
		= d_B^* f_n - e^{i\xi}(2d_B^* T -d_A^*)f_n \\
	& =  (1 - 2e^{i\xi} \cos \xi )d_B^* f_n  +  e^{i\xi}d_A^*f_n + o(1) \\
	& = (- e^{2i\xi} S +  e^{i\xi})d_A^*f_n + o(1) \\
	& = e^{i\xi} \psi_n + o(1).
\end{align*}
Let $\tilde{\psi}_n : = \psi_n/\|\psi_n\|$. 
Then, from an argument similar to the above,
we obtain $e^{i\xi} \in \sigma(U_{\overline{\mathcal{L}_1}})$.
Thus $I_1 \subset \sigma(U_{\overline{\mathcal{L}_1}})$ is proved.

Let $\pm 1 \in I_2$. Then $\pm 1 \in \sigma_{\rm c}(T)$
and hence $\pm 1$ can not be an isolated point of $\sigma(T)$.
Hence there exists a sequence $\{ \cos \xi_n \} \subset \sigma(T) \cap (-1, 1)$
such that $\lim_{n \to \infty} \cos \xi_n = \pm 1$.
Because $\lim_{n \to \infty} e^{i\xi_n} = \pm 1$ and $e^{i\xi_n} \in I_1$,
from the above result, we know
that $e^{i\xi_n} \in \sigma(U_{\overline{\mathcal{L}_1}})$.
Because $\sigma(U_{\overline{\mathcal{L}_1}})$ is a closed set,
we have $\pm 1\in \sigma(U_{\overline{\mathcal{L}_1}})$.
Thus $I_2 \subset \sigma(U_{\overline{\mathcal{L}_1}})$ is proved.
}
\end{proof}

\section{Concluding remark}
In this paper, we clarified that the unitary involution of the shift operator 
and coisometry of the boundary map cause the reduction of the spectral analysis 
of the unitary operator to one of the underlying self-adjoint operator. 
This result implies that the spectral mapping theorem can be applied to 
general infinite graphs. 
As is seen in Introduction, 
if the underlying symmetric 
random walk on an infinite graph has only the point spectrum,
{\it e.g.}, the Sierpi\'nski lattice, 
then the induced Grover walk also has only the point spectrum (without continuous spectrum).  
This concludes that the induced Grover walk
exhibits localization for any initial state.  
In a companion paper~\cite{SS}, we clarify a relationship between the spectrum and stochastic behavior of our abstractive quantum walk.

\noindent \\
\noindent \\
{\bf Acknowledgments.} 
We thank Hiromichi Ohno for useful comments. 
YuH's work was supported in part by Japan Society for the
Promotion of Science Grant-in-Aid for Scientific Research (C) 25400208 and 
(A) 15H02055 and for Challenging Exploratory Research 26610025. 
ES's work was partially supported by the Japan-Korea Basic
Scientific Cooperation Program ``Non-commutative Stochastic Analysis: New Prospects of Quantum White Noise and Quantum Walk" (2015-2016). 
ES and AS also acknowledge financial supports of the Grant-in-Aid for Young Scientists (B) of Japan Society for the Promotion of Science (Grants No.16K16637 and No.26800054, respectively).



\begin{thebibliography}{99}
\bibitem[Am0]{Am0} 
	A. Ambainis, 
	Quantum walks and their algorithmic applications, 
        {\it Int. J. Quantum Inf.} {\bf 1} (2003) 507--518. 

\bibitem[Am1]{Am1}
	A. Ambainis, 
        Quantum walk algorithm for element distinctness, 
        {\it Proc. IEEE Symposium on Foundations of Computer Science} (2004) 22--31. 

\bibitem[AKR]{AKR}
	A. Ambainis, J. Kempe, and A. Rivosh, 
        Coins make quantum walks faster,
        {\it Proc. ACM-SIAM Symposium on Discrete Algorithm} (2005) 1099--1108. 

\bibitem[BBW]{BBW} 
	S. D. Berry, P. Bourkeb, and J. B. Wang, 
 	qwViz: Visualisation of quantum walks on graphs, 
        {\it Comp. Phys. Commun.} {\bf 182} (2011) 2295--2302. 

\bibitem[BP]{BP}
        M. T. Barlow and E. A. Perkins,
        Brownian motion on the Sierpinski gasket,
        {\it Probab. Theory Related Fields} {\bf 79} (1988) 543--623. 
        
\bibitem[BFP]{BFP}
	S. Boettcher, S. Falkner, and R. Portugal, 
	Renormalization and scaling in quantum walks,
	{\it Phys. Rev.}  {\bf A90} (2014) 032324.
	
\bibitem[FS]{FS}
	M. Fukushima and T. Shima, 
        On a spectral analysis for the Sierpinski Gasket, 
        {\it Potential Analysis} {\bf 1} (1992) 1--35. 

\bibitem[FH]{FH}
	R. F. Feynman and A. R. Hibbs, 
       {\it Quantum Mechanics and Path Integrals}, 
	McGraw-Hill, Inc., New York (1965). 
        
\bibitem[FN]{Figa}
        A. Fig{\'a}-Talamanca and C. Nebbia,
        Harmonic Analysis and Representation Theory for 
        Groups Acting on Homogenous Trees, 
        {\it Cambridge University Press} (1991).

\bibitem[GeNi]{GeNi}
	C. G\'erard and F. Nier,
		The Mourre theory for analytically fibered operators,
		{\it J. Funct. Anal.} {\bf 152} (1998) 202--219.
		
\bibitem[Gu]{Gu}
	S. Gudder, 
        Quantum Probability, 
        {\it Academic Press} (1988).      

\bibitem[HKSS13]{HKSS13}
Yu. Higuchi, N. Konno, I. Sato, and E. Segawa, 
Quantum graph walks I: mapping to quantum walks,
{\it Yokohama Math. J.} {\bf 59} (2013) 33 -- 55.

\bibitem[HKSS14]{HKSS14} 
	Yu. Higuchi, N. Konno, I. Sato, and E. Segawa,
	Spectral and asymptotic properties of Grover walks on crystal lattices,
	{\it J. Funct. Anal.} {\bf 267} (2014) 4197 -- 4235.  

\bibitem[HiNo]{HiNo}
	Yu. Higuchi and Y. Nomura,
	Spectral structure of the Laplacian on a covering graph,
	European J. Combin. {\bf 30} (2009 )570 -- 585.
\bibitem[HSe]{HSe}
        Yu. Higuchi and E. Segawa,
        Quantum walks inherited by Dirichlet random walks on a regular tree, 
        {\it preprint.}

\bibitem[HS]{HS}
	Yu. Higuchi and T. Shirai, 
        Some spectral and geometric properties for infinite graphs
        {\it Contemp. Math.} {\bf 347} (2004) 29--56.         
\bibitem[IKS]{IKS}
	Y. Ide, N. Konno, and E. Segawa,
        Time averaged distribution of a discrete-time quantum walk on the path, 
        {\it Quantum Inf. Process.} {\bf 11} (2012) 1207-1218. 
                
\bibitem[KOS]{KOS}
	N. Konno, N. Obata, and E. Segawa, 
   	Localization of the Grover walks on spidernets and free Meixner law, 
        {\it Comm. Math. Phys.}  {\bf 322} (2013) 667 -- 695.    
   
\bibitem[LP]{LP}
	P. C. S. Lara and R. Portugal,
	Quantum walks on Sierpinski gaskets, 
	{\it Int. J. Quantum Inf.} {\bf 11} (2013) 1350069. 
	
\bibitem[SS]{SS}
	E. Segawa and A. Suzuki, 
        Generator of an abstract quantum walk, 
       {\it  Quantum Stud.: Math. Found.} {\bf 3} (2016) 11-30.
        
\bibitem[SKW]{SKW} 
	N. Shenvi, J. Kempe, and K. Whaley, 
	Quantum random-walk search algorithm,
	{\it Phys. Rev.} {\bf A 67} (2003) 052307. 

\bibitem[S]{Sunada}
        T. Sunada,
        Fundamental groups and Laplacians, 
        {\it Proc. of Taniguchi Sympos., 
             Geometry and Analysis on Manifolds, 1987, 
             Springer Lecture Notes in Math.} {\bf 1339} (1986) 248--277.

\bibitem[Sz]{Sz} 
	M. Szegedy, 
	Quantum speed-up of Markov chain based algorithms,
	{\it Proc. 45th IEEE Symposium on Foundations of Computer Science}
	 (2004) 32 --41. 

\bibitem[T]{T}
        A. Teplyaev,
        Spectral analysis on infinite Sierpi\'nski gasckets,
        {\it J. Funct. Anal.} {\bf 159} (1998) 537--567. 
\bibitem[W]{W}
	J. Watrous, 
	Quantum simulations of classical random walks and undirected graph connectivity,
	{\it J. Comput. System Sci.} {\bf 62} (2001) 376--391.
\end{thebibliography}
\end{document}